\newcommand{\ie}{\textit{i.e. }}
\newtheorem{theorem}{Theorem}
\newtheorem{corollary}{Corollary}
\newtheorem*{remark}{Remark}
\newtheorem{definition}{Definition}
\begin{document}

\title{Coded Caching with Heterogeneous User Profiles}

\author{
\IEEEauthorblockN{Ciyuan Zhang,~\IEEEmembership{Student Member,~IEEE} Su Wang, Vaneet Aggarwal,~\IEEEmembership{Senior Member,~IEEE} and Borja Peleato,~\IEEEmembership{Senior Member,~IEEE}\thanks{C. Zhang, S. Wang, and V. Aggarwal are with Purdue University, West Lafayette, IN 47906, USA. Email:\{zhan3375,wang2506,vaneet\}@purdue.edu}}
\thanks{B. Peleato is with Carlos III University of Madrid, Leganes, Spain. Email: bpeleato@ing.uc3m.es}
\thanks{This work was partially funded by the CONEX-Plus Programme - Marie-Sklodowska Curie COFUND Action (H2020-MSCA-COFUND-2017- GA 801538) and Banco Santander.}}%
\bibliographystyle{IEEEtran}
\maketitle

\begin{abstract}
Coded caching utilizes pre-fetching during off-peak hours and multi-casting for delivery in order to balance the traffic load in communication networks. Several works have studied the achievable peak and average rates under different conditions: variable file lengths or popularities, variable cache sizes, decentralized networks, etc. However, very few have considered the possibility of heterogeneous user profiles, despite modern content providers are investing heavily in categorizing users according to their habits and preferences.

This paper proposes three coded caching schemes with uncoded pre-fetching for scenarios where end users are grouped into classes with different file demand sets (FDS). One scheme ignores the difference between the classes, another ignores the intersection between them and the third decouples the delivery of files common to all FDS from those unique to a single class. The transmission rates of the three schemes are compared with a lower bound to evaluate their gap to optimality, and with each other to show that each scheme can outperform the other two when certain conditions are met.


\end{abstract}

\section{Introduction}
\label{s-intro}

The recent information explosion is constantly pushing the limits of communication networks, users always want more information at faster speeds and with minimal latency. Network operators hope to address this problem by pushing the content and computation closer to the end users, in what is commonly known as fog networking~\cite{yi2015survey}. Having multiple caches distributed across the network helps balance the load over the internet backbone, but does not alleviate the congestion that often arises at the edge of the network during peak hours. Coded caching was introduced as a powerful solution for solving this problem.

A coded caching scheme consists of a placement and a delivery phase. The placement phase takes place during off-peak hours, when there are spare resources in the network. The server partitions all the files into segments and stores them in the users' caches. The delivery phase takes place during peak hours, when multiple (if not all) users have file requests. The server attempts to fulfill all those requests with minimal information transmitted, by leveraging the segments cached during the placement phase. It has long been known that proactively caching popular content during off-peak hours reduces the total information to be transmitted when that content is requested. This gain depends on the hit rates on the local cache of the end users, so it is known as local caching gain. However, Maddah-Ali and Niesen's seminal paper~\cite{maddah2014fundamental} recently showed that the overall transmission rate in
point-to-multipoint links can be reduced further by carefully coordinating the cached segments and using a coded delivery scheme. This gain depends on the segments shared by the different user subgroups and is therefore known as global caching gain.

In~\cite{maddah2014fundamental}, Maddah-Ali and Niesen proposed a coded caching scheme which maximizes multicasting opportunities for the worst case user demands. Subsequent works focused on lowering the peak rate in different scenarios~\cite{niesen2017coded, pedarsani2016online}. However, these papers adopted homogeneous models which do not fit most practical systems where coded caching could potentially be used.
%
%
Some recent works have analyzed the transmission rate of coded caching systems with full heterogeneity:~\cite{chan2019coded} considered different file sizes, cache sizes, and user dependent file popularity, but only for two users and two files. Centralized and decentralized coded caching schemes with heterogeneous user cache sizes were studied in~\cite{ibrahim2019coded}~and~\cite{bayat2020cache}, respectively, but they ignored the users' diverse preference over files. The work in~\cite{daniel2019optimization} provided an optimization theoretic analysis of coded caching systems with various heterogeneities (cache size, file length, and file popularity) and demonstrated that Maddah-Ali and Niesen's original scheme from~\cite{maddah2014fundamental} is optimal for problems with uniform file size, popularity, and cache size.
Unfortunately, the results in~\cite{daniel2019optimization} are derived numerically, without theoretical evidence. Furthermore, it does not address user heterogeneity, which is the focus of this paper. A scenario with heterogeneous file popularities was addressed in \cite{8091300} by partitioning files into groups such that, within each group, the files have approximately equal popularities. However, it again assumed that the popularity of each file was identical for every user. Additional research on coded caching has extended it to topics such as device-to-device caching~\cite{golrezaei2013femtocaching}, hierarchical caching~\cite{karamchandani2016hierarchical}, and distinct file sizes~\cite{zhang2015coded}.

Papers like~\cite{hannak2013measuring,mccallum1999machine,agichtein2006learning} have addressed the significance of predicting users behavior according to their preferences. This mirrors the current trend of online video streaming companies like Hulu and Netflix which spend a considerable amount of resources investigating their customers' habits and categorizing them according to their streaming preferences.
The paper~\cite{lu2019effective} utilized game theory to analyze the transmission cost of a centralized coded caching system when the users present heterogeneous preferences over the files requested, but it neglected the alternative of organizing the users into groups according to their preferences. The recent paper~\cite{he2020coded} categorized users into two groups: VIP and non-VIP. It proposed schemes so that the VIP group obtains better experience (lower transmission rate) than the other group. However, it only considered the decentralized coded caching system model and merely paid attention to specific users instead of the whole ensemble.
Another coded caching scheme with user grouping was proposed in~\cite{tegin2020coded} for wireless channels, and its results indicate that grouping users based on their channel conditions is beneficial for reducing transmission time, especially for small cache sizes.
Our prior conference paper~\cite{wang2019coded} addressed a system where the users are grouped into classes with similar file interests. It proposed three coded caching schemes for this scenario and studied their peak rate, but it did not provide a comprehensive comparison between them. This paper will do that and study the subject in more detail.

Most existing works have focused on studying the peak rather than the average rate of coded caching systems. This is mainly due to the fact that the average rate is highly dependent on the distribution of the requests and that the peak rate is an important factor in the design of small networks. When the number of files and users is large, however, the peak rate is very rarely reached and the average rate is a better metric for performance evaluation. There have been works studying the average rate, e.g.,~\cite{luo2019coded,yu2018exact}, but they assumed the same distribution of requests for all users. The scenario with heterogeneous user profiles was thoroughly analyzed in~\cite{chang2020twousers}, but just for the case of two users. Our prior work~\cite{zhang202averagerate} studied the average rate resulting from the three schemes proposed in~\cite{wang2019coded} and compared their asymptotic performance.

The main contributions of this paper include:
1) characterizing the peak and average rates of the three schemes proposed in~\cite{wang2019coded} for a coded caching system with heterogeneous user profiles; 2) deriving lower bounds for the peak rate of the three schemes; 3) proposing a cache distribution method which results in minimal peak and average rate for one of the schemes when the caches are relatively small compared with the size of the library; 4) comparing the peak rate of transmission of the three schemes analytically to provide insights for deciding which scheme to choose given the system's parameters.
%
%
%

The paper will be organized as follows: Section~\ref{s-background} introduces our system model and the notation to be used throughout the paper. Section~\ref{s-schemes} describes the three coded caching schemes being proposed and analyzed. Section~\ref{s-lower bound} derives a lower bound for the peak rate of a coded caching scheme with heterogeneous user profiles and compares the peak rate of the three schemes with that bound. Section~\ref{s-Results} studies how to optimally distribute the cache among the different types of files and compares the peak rate of the three schemes according to the cache size. Finally, Section~\ref{s-simulations} provides numerical simulation results to illustrate and support our derivations, and Section~\ref{s-conclusion} concludes the paper.


\section{Background}
\label{s-background}

\subsection{System Model}

This paper considers a system with a single server storing $N$ files of size $F$, which is connected through an error-free broadcast link to $K$ end users equipped with cache memories of size $MF$ each. The $K$ users are split into $G$ classes according to the files that they may request. Out of the $N$ files, $N_c$ are common files which may be requested by users in any class and the rest are unique files which are only appealing to one class of users, such as cartoon or sci-fi movies. For simplicity, this paper assumes that the number of users and unique files is the same for every class, and that they do not intersect. Therefore, each class has $\frac{K}{G}$ users and $N_u$ unique files, where $N=N_c+GN_u$. Furthermore, we assume that the number of users is smaller than the number of common and unique files. The quantities $K$, $G$, $N_c$, and $N_u$ are generally discrete in practice, but this paper will often treat them as continuous to avoid integer effects during calculations. If their values are large enough, the rounding errors can be neglected. This scenario is illustrated in Fig.~\ref{fig:sys_model} with only two classes.

In the placement phase, caches are populated with file segments. This paper only considers uncoded prefetching, which means that segments are cached in plain form, not coded together. As asserted in~\cite{chang2020twousers}, uncoded prefetching is suboptimal, but it has many advantages: it allows for asynchronous transmissions, reduces latency, simplifies the bookeeping, etc. Since file segments are cached in plain form, there will be a section of the cache storing segments from common files and another storing segments from unique files, as shown in Fig.~\ref{fig:sys_model}. 

In the delivery phase, each user $k$ requests a single random file $d_k$ from the server. We denote the probability mass function (pmf) of the random request $d_k$ as $p_{d_k}^{[k]}$.

\begin{definition}
The $\text{demand set}$ for user $k$ is defined as $ S_k \triangleq \left\{  n \in [1, N_c+GN_u]: p_n^{[k]}>0 \right\}$, which represents the set of distinct files that can be requested by user $k$ with a positive probability. 
\end{definition}

It can be written that
\begin{equation}
    d_k \in S_k, \qquad \sum_{\forall{d\in S_k}} p_{d}^{[k]} = 1.
\end{equation}

\begin{definition}
The demand vector $\vec{d}=(d_1,\ldots,d_K)$ is defined as the set of files requested by users in the delivery phase, and $N(\vec{d})\in [1,\text{min} \{ K,N \}]$ denotes the number of distinct files in $\vec{d}$.
\end{definition}

Our goal will be to minimize the data rate (traffic from the server to the users) required to satisfy the users' requests. We consider two different metrics for such rate:

\begin{definition}
The peak and average rates of a coded caching scheme are respectively defined as
\begin{equation}
R^*(M) = \max_{\forall \vec{d}: d_k \in S_k} R_{\vec{d}}, \qquad \Bar{R} = \sum_{\forall \vec{d}, d_k \in S_k} p_{\vec{d}} R_{\vec{d}},
\end{equation}
where $R_{\vec{d}}$ denotes the number of bits transmitted to satisfy request vector $\vec{d}$.
\end{definition}

The average rate is highly dependent on the pmf of the requests $p_n^{[k]}$. In order to make the equations more tractable and facilitate the comparison with other coded caching schemes, our simulations will focus on the uniform-average rate, defined as follows.

\begin{definition}
The uniform-average-rate of a coded caching scheme is defined as
\begin{equation}
\Tilde{R} = \frac{1}{\prod_{k=1}^K |S_k|}\sum_{\forall \vec{d}, d_k \in S_k} R_{\vec{d}}.
\end{equation}
\end{definition}

This definition of uniform-average-rate is different from that in~\cite{yu2018characterizing}, where the distribution is simply uniform over $[N]^K$. It replaces the joint distribution $p_{\vec{d}}$ with a uniform distribution over the file demand set $S_1\times S_2\times\cdots\times S_K$.

\begin{figure}
\centering
\centerline{\includegraphics[width=.5\textwidth]{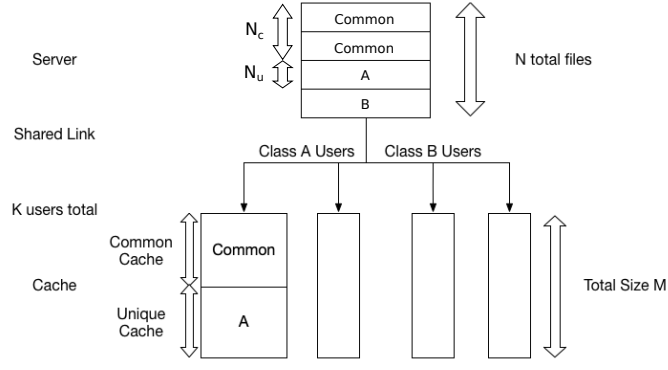}}
\caption{System Model with two distinct classes, A and B, each having two users. Each user's cache is divided into a section for common and another for unique files.}
\label{fig:sys_model}
\end{figure}

\subsection{Maddah-Ali and Niesen’s scheme}

This paper generalizes the centralized coded caching scheme with uncoded prefetching proposed by Maddah-Ali and Niesen~\cite{maddah2014fundamental}, from this point on referred to as MN's scheme, to heterogeneous user profiles. It is therefore important to review such scheme before we go any further.

In the placement phase, MN's scheme splits each file into $\binom{K}{t}$ non-overlapping segments, where $t = \frac{KM}{N}$. Each segment is cached by a distinct set of $t$ users, which results in each user caching $\binom{K-1}{t-1}$ segments per file. 
Specifically, this scheme is able to satisfy any vector of requests by transmitting at most $\binom{K}{t+1}$ messages of size $\binom{K}{t}^{-1}F$ bits. 
The peak rate (normalized by the file size $F$) is written as
\begin{align}\label{eq:MN}
R_{MN}(K,t) &=\frac{\binom{K}{t+1}}{\binom{K}{t}}\\
&= \frac{K-t}{t+1}\label{eq:MNsimplified}.
\end{align}
 If the server only receives requests for $m$ distinct files (e.g., only some of the users make a request, or their requests overlap), then the transmission rate with MN's scheme will become
\begin{equation}\label{eq:Rmreq}
R(K,\vec{d},t)=\frac{\binom{K}{t+1} - \binom{K-N(\vec{d})}{t+1}}{\binom{K}{t}},
 \end{equation}
 as was shown in~\cite{luo2019coded}.

 This paper will treat $t$ as continuous, just like it did with $K$, $G$, $N_c$, and $N_u$, to avoid integer effects. The next subsection explains how the combinatorial expressions in~Eq.~(\ref{eq:Rmreq}) can be extended to continuous arguments.

\subsection{Approximation of Transmission Rate for Simulation}
This subsection shows how Eq.~(\ref{eq:Rmreq}) can be extended into a continuous function over $0\leq t \leq K$.

When $t\leq 1$, the overall size of all the caches is not enough to store the $N$ files in full. It is therefore necessary to leave a fraction of each file out of the coded caching scheme and transmit it uncoded whenever that file is requested. The minimal such fraction can be found as $p=1-\frac{KM}{N}$.
Hence, according to~\cite{luo2018transfer}, the overall transmission rate for demand vector $\vec{d}$ when $t\leq1$ is
\begin{align}\label{eq:rate_t<1}
R &= N(\vec{d}) p + [\text{Rate if } t=1](1-p),
\end{align}
where $N(\vec{d})$ denotes the number of distinct files being requested.

When $K-1<t\leq K$, the opposite happens. The caches are large enough that a fraction of each file needs to be cached by every user, otherwise part of the caches would be left empty. It is therefore never necessary to transmit that fraction and coded caching schemes can be used to transmit the rest when the file is requested. The minimal such fraction can be found as $\gamma =K-t$. Hence, the overall transmission rate when $t\in (K-1,K]$ is
\begin{align}\label{eq:rate_t>K-1}
R &=0\cdot \gamma + [\text{Rate if } t=K-1](1-\gamma).
\end{align}

When $1\leq t\leq K-1$ but it is not an integer, Eq.~(\ref{eq:Rmreq}) is not well defined because the binomial coefficients require integer and strictly non-negative arguments. In order to interpolate these coefficients continuously, we use the Gamma function, which satisfies
$\Gamma(n) = (n-1)!$ for every integer $n$ and therefore
\begin{align}\label{eq:fact_gamma_relaxation}
\binom{n}{k} = \frac{n!}{k!(n-k)!} = \frac{\Gamma(n+1)}{\Gamma(k+1)\Gamma(n-k+1)}
\end{align}
without error when $n$ and $k$ are integers.

\section{Proposed Schemes}
\label{s-schemes}

The schemes proposed and analyzed in this paper are variations of MN's scheme and were first presented in our conference paper~\cite{wang2019coded}.

\subsection{Scheme 1: All common}
The system behaves as if all files are common during the placement phase, sacrificing local caching gain in favor of global caching gain. It ignores the distinction between all user profiles and requires every user to cache segments from every file, even if it would never request some of them. 
MN's scheme with $N=N_c+GN_u$ files is utilized for the placement and delivery. When $N\leq KM\leq(K-1)N$ the peak and average rates can be derived from Eqs.~(\ref{eq:MNsimplified})~and~(\ref{eq:Rmreq}), otherwise it becomes necessary to adjust their values as shown in Eqs.~(\ref{eq:rate_t<1})~and~(\ref{eq:rate_t>K-1}).

\begin{itemize}
    \item Peak rate: The peak rate is equivalent to that in MN's scheme with $N$ files and $K$ users. According to Eq.~(\ref{eq:MNsimplified}) it can be computed as:
    \begin{equation}
    \label{eq:R1peak}
    R_\mathrm{peak}^{(1)} = \frac{K-t_1}{t_1+1},
    \end{equation}
    where $t_1=\frac{KM}{N}$.

\item Average rate: The average rate is also equivalent to that in MN's scheme with $N$ files and $K$ users. Taking the expectation over the distribution of requests and using Eq.~(\ref{eq:Rmreq}) to compute the rate associated with each individual request vector yields:
\begin{equation}
    \label{eq:R1avg}
     R_\mathrm{avg}^{(1)} = \sum_{\forall{\vec{d}}} p_{\vec{d}} \frac{\binom{K}{t_1+1} - \binom{K- N_1(\vec{d})}{t_1+1}}{\binom{K}{t_1} } ,
    \end{equation}
where $p_{\vec{d}}$ denotes the probability associated to demand vector $\vec{d}$ and $N_1(\vec{d})$ denotes the number of distinct files requested by the $K$ users according to $\vec{d}$.

\end{itemize}

\subsection{Scheme 2: Split}
The system deals with common and unique files separately, decoupling their placement and delivery. A fraction $x$ of each user's cache is devoted to storing segments from common files and the remaining $(1-x)$ to store segments from unique files. 
Segments from common files are distributed over all $K$ users according to MN's scheme with $N_c$ files and $MFx$ bits of cache per user. Segments from unique files are only cached by the $K/G$ users in their corresponding class, also following MN's scheme to fill the remaining $MF(1-x)$ bits of cache capacity per user. The delivery phase is independent for common and unique files, never encoding segments from different file types in the same message.

If $\alpha$ out of the $K/G$ users in each class request unique files, the peak data rate for this scheme is given by
    \begin{align}
    R^{(2)}(x,\alpha) &= R_{c}(x,\alpha) + GR_{u}(x,\alpha)\label{eq:R2expand}\\
    &=\frac{\binom{K}{t_c+1}-\binom{G\alpha}{t_c+1}}{\binom{K}{t_c}}+G\frac{\binom{\frac{K}{G}}{t_u+1}-\binom{\frac{K}{G}-\alpha}{t_u+1}}{\binom{\frac{K}{G}}{t_u}},\nonumber
    \end{align}
    where $t_c = K\frac{Mx}{N_c}$ and $t_u = \frac{K}{G}\frac{M(1-x)}{N_u}$. The above expressions implicitly assume that $t_c$ and $t_u$ are both larger than 1 and smaller than $K-1$ and $\frac{K}{G}-1$, respectively. Otherwise, they need to be adjusted according to Eqs.~(\ref{eq:rate_t<1})~or~(\ref{eq:rate_t>K-1}).
     The two terms in Eq.~(\ref{eq:R2expand}) correspond to the rate required to deliver common files, $R_c(x,\alpha)$, and that required to deliver unique files, $R_u(x,\alpha)$, for each class. Despite users are only caching the files in their demand set, $x$ might favor unique files over common files (or vice versa), so the local caching gain is still being sacrificed for the benefit of global caching gain.

\begin{itemize}
    \item Peak rate: Theorem~\ref{thm:alphapeak} will later prove that, when the number of users is sufficiently large, the peak rate is achieved when the number of users requesting unique files is the same for every class. Hence, the peak rate can be found as
    \begin{equation}
    \label{eq:R2_peak}
        R_\mathrm{peak}^{(2)} = \min_x\max_\alpha R^{(2)}(x,\alpha),
    \end{equation}
    where the fraction $x$ is being optimized to minimize the peak rate.

\item Average rate: The average rate can be calculated as:
\begin{equation}\label{eq:Sch2eq}
R_\mathrm{avg}^{(2)} = R_c+\sum_{i=1}^G R_{u_i},
\end{equation}
consisting of the average transmission rate for common files $R_c$ and that for unique files $R_u$ in each class. Taking the expectation over the distribution of requests and using Eq.~(\ref{eq:Rmreq}) with a reduced memory $Mx$ and number of files $N_c$ to compute the rate associated with each individual request vector yields:
\begin{equation}
    \label{eq:Rc}
     R_\mathrm{c} = \sum_{\forall{\vec{d}}} p_{\vec{d}} \frac{\binom{K}{t_c+1} - \binom{K- N_c(\vec{d})}{t_c+1}}{\binom{K}{t_c} },
\end{equation}
where $t_c = \frac{K M x}{N_c}$ and $N_c(\vec{d})$ denotes the number of distinct common files requested by the $K$ users.
Similarly, the average transmission rate for unique files in the $i$-th class can be found by using Eq.~(\ref{eq:Rmreq}) with $K/G$ users, $N_u$ files, and capacity for $M(1-x)$ files in the cache:
\begin{equation}\label{eq:Ru}
R_{u_i} = \sum_{\forall{\vec{d}}} p_{\vec{d}} \frac{\binom{\frac{K}{G}}{t_u+1} - \binom{\frac{K}{G}- N_{u_i}(\vec{d})}{t_u+1}}{\binom{\frac{K}{G}}{t_u}},
\end{equation}
where $t_u = \frac{KM(1-x)}{G N_u}$ and $N_{u_i}(\vec{d})$ denotes the number of distinct unique files requested by the $K/G$ users in the $i$-th class.

\end{itemize}

\subsection{Scheme 3: All unique}
The system behaves as if all files are unique, maximizing local caching gain in detriment of global caching gain. It disregards the fact that common files can be requested by all user classes and independently applies MN's scheme for placement and delivery phases within each class of users. Instead of caching all $N_c+GN_u$ files, the users only cache the $N_c+N_u$ files corresponding to their class during the placement phase. When $G(N_c+N_u)\leq KM\leq(K-G)(N_c+N_u)$ the peak and average rates can be derived from Eqs.~(\ref{eq:MNsimplified})~and~(\ref{eq:Rmreq}), otherwise it becomes necessary to adjust their values as shown in Eqs.~(\ref{eq:rate_t<1})~and~(\ref{eq:rate_t>K-1}).

\begin{itemize}
    \item Peak rate: The peak rate for each class is equivalent to that in MN's scheme with $\frac{K}{G}$ users and $N_c+N_u$ files. Multiplying the rate in Eqs.~(\ref{eq:MNsimplified}) by the number of classes $G$ gives:
    \begin{equation}
    \label{eq:R3}
    R_\mathrm{peak}^{(3)} = G\frac{\frac{K}{G}-t_3}{t_3+1},
   \end{equation}
   where $t_3=\frac{K}{G}\frac{M}{N_c+N_u}$.

\item Average rate: The average rate for this scheme can be computed as the sum of the expected rates within each class. Using Eq.~(\ref{eq:Rmreq}) with $\frac{K}{G}$ users and $N_c+N_u$ files to compute the expected rates results in:
\begin{equation}
    \label{eq:R3_expand}
    R_\mathrm{avg}^{(3)} = \sum_{i=1}^G \sum_{\forall{\vec{d}}} p_{\vec{d}} \frac{\binom{\frac{K}{G}}{t_3+1} - \binom{\frac{K}{G}- N_{3_i}(\vec{d})}{t_3+1}}{\binom{\frac{K}{G}}{t_3} } ,
    \end{equation}
where $t_3 = \frac{K}{G}\frac{M}{N_c+N_u}$ 
and $N_{3_i}(\vec{d})$ represents the number of distinct files requested by the $K/G$ users in the $i$-th class.

\end{itemize}

\section{Lower Bound of Peak Rate}
\label{s-lower bound}
This section derives a lower bound for the peak rate of a coded caching system with heterogeneous user profiles and compares it with the peak rate of the three schemes from Section~\ref{s-schemes}.

\begin{theorem}\label{thm:thm_cut}
The peak rate of a coded caching scheme with $G$ user classes, $\frac{K}{G}$ users in each class, $N=N_c+GN_u$ total files, and cache size of $M$ files per user, can be bounded as
\begin{equation}\label{eq:Rpeak_bound}
    R^*(M)\geq \max_{s\in \left\{ 1,\ldots, \frac{\min(K,N)}{G}\right\}} \Big(  Gs - \frac{GsM}{ \left \lfloor \frac{ N}{Gs} \right \rfloor  }\Big).
\end{equation}
This result is based on a cut-set bound argument. 
\end{theorem}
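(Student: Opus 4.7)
The plan is to adapt the cut-set argument that underlies Maddah-Ali and Niesen's original lower bound to the heterogeneous setting. For any admissible $s\in\{1,\ldots,\min(K,N)/G\}$ I would single out $s$ users from each of the $G$ classes, for a total of $Gs$ users whose joint cache content amounts to $GsMF$ bits, and construct an ensemble of $L=\lfloor N/(Gs)\rfloor$ demand vectors $\vec{d}_1,\ldots,\vec{d}_L$ with two properties: (i) inside each $\vec{d}_j$ the $Gs$ selected users request $Gs$ pairwise distinct files, each taken from the requesting user's demand set; and (ii) the $GsL$ files requested across the whole ensemble are all pairwise distinct.

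Once such an ensemble is in place, the counting half of the argument is immediate. The server's $L$ transmissions, of size at most $R^*(M)F$ bits each, together with the $Gs$ caches of size $MF$ bits each, must be sufficient to reconstruct the $GsL$ distinct requested files. Since files are independent of entropy $F$ bits, this gives
\begin{equation*}
L\cdot R^*(M)\cdot F \;+\; Gs\cdot MF \;\geq\; GsL\cdot F,
\end{equation*}
which rearranges to $R^*(M)\geq Gs - GsM/\lfloor N/(Gs)\rfloor$. Maximizing over $s$ in the admissible range then yields the stated bound.

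The main obstacle is the combinatorial construction of steps (i)--(ii), because each class-$i$ user is only allowed to ask for files in its own demand set of size $N_c+N_u$, while the common files are shared across classes and may not be re-used between classes within the same demand vector. My plan is to partition the $N_c$ common files into $G$ disjoint pools of sizes as close to $N_c/G$ as possible, assign pool $i$ exclusively to the $s$ selected users of class $i$, and regard the union of pool $i$ with the $N_u$ unique files of class $i$ as class $i$'s \emph{effective library}, of size at least $\lfloor N_c/G\rfloor+N_u=\lfloor N/G\rfloor$. Within each class I would then draw $sL$ pairwise distinct files from its effective library and split them into $L$ disjoint blocks of $s$ files, one block per demand vector, assigning each block to the $s$ selected users of that class via any bijection. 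The inequality $sL\leq\lfloor N/G\rfloor$, obtained by noting that $sL$ is an integer bounded above by $s\cdot N/(Gs)=N/G$, guarantees that the $sL$ files can be chosen, while disjointness of the pools across classes yields pairwise distinct files within every $\vec{d}_j$. The continuous treatment of $K$, $G$, $N_c$, and $N_u$ adopted throughout the paper renders the uneven partition of common files harmless; the floor in $\lfloor N/(Gs)\rfloor$ is the only integer artifact that survives, exactly matching the form of the bound in the theorem statement.
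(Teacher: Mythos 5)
Your proposal is correct and follows essentially the same route as the paper's proof: the same cut consisting of $s$ users per class, the same partition of the $N_c$ common files into per-class pools so that each class has an effective library of $N/G$ files, the same sequence of $\lfloor N/(Gs)\rfloor$ demand rounds with fresh files each round, and the same counting inequality $L R^*(M) + GsM \geq GsL$. Your handling of the case where $G$ does not divide $N_c$ (via the bound $sL \leq \lfloor N/G\rfloor$) is in fact slightly more careful than the paper, which simply assumes divisibility and discards excess files otherwise.
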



\begin{proof}
 For simplicity, this proof assumes that $N_c$ is divisible by $G$. If that were not the case, we could just discard a few common files and prove the bound for a system with fewer files. Since the peak rate always increases with the size of the library, the bound would still hold for the original system.

Let $s \in \{1,2,\ldots,\frac{\text{min}\{K,N\}}{G}\}$ and consider the first $s$ users from each class $\gamma =1,2,\ldots,G$ denoting their caches $Z_1^{\gamma}, Z_2^{\gamma}, \ldots, Z_s^{\gamma}$. Divide the $N_c$ common files into $G$ sets so that each class has $\frac{N_c}{G}+N_u$ files associated with it. Denote them $\{W_1^{\gamma}, W_2^{\gamma},\ldots, W_{\frac{N}{G}}^{\gamma} \}$. Without loss of generality, we assume that the first $s$ files are requested for every class and the server fulfills those requests by transmitting $X_1$. The first $s$ users in each class must be able to recover $W_1^{\gamma}, W_2^{\gamma},\ldots, W_{s}^{\gamma}$ from their caches
$Z_1^{\gamma}, Z_2^{\gamma}, \ldots, Z_s^{\gamma}$ and $X_1$. Similarly, when the server sends $X_2$, the users in each class $\gamma$ are able to determine $W_{s+1}^{\gamma}, W_{s+2}^{\gamma},\ldots, W_{2s}^{\gamma}$ with their cache $Z_1^{\gamma}, Z_2^{\gamma}, \ldots, Z_s^{\gamma}, \gamma =1,2,\ldots,G$.
Continue in the same manner up to $X_{\lfloor \frac{N}{Gs} \rfloor }$.  We then have that $X_1, X_2, \ldots, X_{\lfloor \frac{N}{Gs}  \rfloor }$ and $Z_1^{\gamma}, Z_2^{\gamma},\ldots, Z_s^{\gamma}$ are enough to determine $W_{1}^{\gamma}, W_{2}^{\gamma},\ldots, W_{s\lfloor \frac{\lfloor\frac{N}{G}\rfloor}{s}  \rfloor}$, for $\gamma =1,2,\ldots,G$. Fig.~\ref{fig:cutsetbound} illustrates this setting.

\begin{figure}
\begin{center}
\includegraphics[width=.4\textwidth]{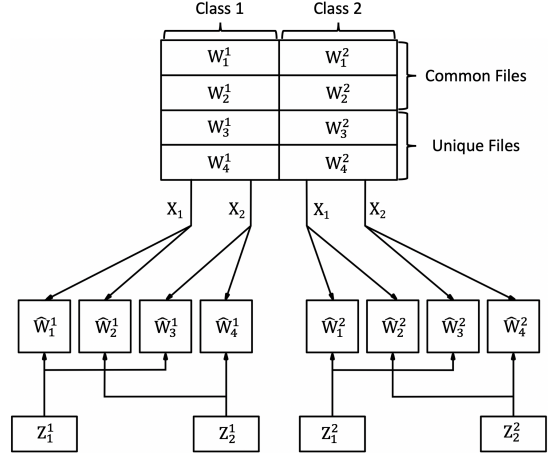}
\end{center}
\caption{Cut corresponding to parameter $s = 2$ in the proof of the converse. In the figure, $N_c =4, N_u=2, G=2, K=8.$}
\label{fig:cutsetbound}
\end{figure}

By the cut-set bound in~\cite{tm1991elements}, we can obtain that
\begin{equation}
    \left \lfloor \frac{N}{Gs} \right \rfloor R^*(M) + GsM \geq Gs\left \lfloor \frac{N}{Gs}  \right \rfloor .
\end{equation}
By solving for $R^*(M)$ and optimizing over all possible choices of $s$, it can be written that
\begin{equation}
    R^*(M)\geq \max_{s\in \left\{ 1,\ldots, \frac{\min(K,N)}{G} \right\} } \Big(  Gs - \frac{GsM}{ \left \lfloor \frac{ N}{Gs} \right \rfloor  }\Big),
\end{equation}
proving the theorem.
\end{proof}

We use $R_{CB}(s)$ to denote the argument maximized in the bound:
\begin{equation}\label{eq:CB}
    R_{CB}(s) = 
    Gs - \frac{GsM}{ \left \lfloor \frac{N}{Gs} \right \rfloor  }.
\end{equation}

\begin{remark}
When $M>\frac{N}{G}$ the above expression $R_{CB}(s)$ is negative for every $s$, and therefore the bound is trivial. Fortunately, $M\leq\frac{N}{G}$ in most practical cases, or every user would be able to cache most of the files in its demand set.
\end{remark}

\begin{remark}
When $G=1$, this bound reduces to the one derived in Theorem 2 of~\cite{maddah2014fundamental}.
\end{remark}




\begin{theorem}\label{thm:cut_set_factor1}
For the heterogeneous user profile model with $K$ users, a database
of $N = N_c+GN_u$ files, and a local cache size of $M$ files at each user with $\frac{N}{K}\leq M\leq\frac{N}{2G}$, 
it can be written that
\begin{equation}
    \frac{R_\mathrm{peak}^{(1)}}{R^*(M)} \leq 8.
\end{equation}
The restriction $\frac{N}{K}\leq M$ is imposed so that $t_1\geq 1$ and $M\leq\frac{N}{2G}$ to ensure that $R_{CB}(s)$ is not negative.
\end{theorem}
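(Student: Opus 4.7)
The plan is to derive a clean upper bound on $R_\mathrm{peak}^{(1)}$ and to match it from below by specializing Theorem~\ref{thm:thm_cut} to a well-chosen value of $s$. For the upper bound, the hypothesis $M\geq N/K$ guarantees $t_1=KM/N\geq 1$, so
\[
R_\mathrm{peak}^{(1)} = \frac{K-t_1}{t_1+1} \leq \frac{K}{t_1} = \frac{N}{M}.
\]
The real work is to produce a lower bound on $R^*(M)$ of the same order of magnitude.

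Treating $s$ as continuous in Eq.~(\ref{eq:CB}), the maximum of $R_{CB}(s)$ sits near $s=N/(2GM)$ and equals roughly $N/(4M)$, which would already give a factor-$4$ bound; the extra factor of two pays for the floor in the denominator and for the integer constraint on $s$. I would therefore split the admissible range of $M$ at the threshold $N/(4G)$ and handle the two regimes separately. In the \emph{large-$M$} regime $N/(4G)\leq M\leq N/(2G)$, I would take $s=1$, obtaining $R_{CB}(1)=G-GM/\lfloor N/G\rfloor\geq G/2$ (using the continuous treatment of $N$ and $G$ adopted in Section~\ref{s-background} together with $M\leq N/(2G)$), while $R_\mathrm{peak}^{(1)}\leq N/M\leq 4G$, so the ratio is at most $8$. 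In the \emph{small-$M$} regime $M<N/(4G)$, I would take $s=\lfloor N/(2GM)\rfloor$; since $N/(2GM)>2$, the floor loses at most a factor of two and $Gs\geq N/(4M)$; together with $Gs\leq N/(2M)$ this yields $\lfloor N/(Gs)\rfloor\geq 2M$, hence $R_{CB}(s)\geq Gs/2\geq N/(8M)$ and again a ratio of at most $8$.

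The main obstacle I anticipate is not the choice of $s$ but the careful bookkeeping of the floor operations in both regimes: one must verify that $M/\lfloor N/(Gs)\rfloor\leq 1/2$ and that the integer rounding of $s=\lfloor N/(2GM)\rfloor$ does not erode $Gs$ by more than a factor of two. The assumption $M\leq N/(2G)$ is exactly what keeps the large-$M$ branch non-trivial (as already noted in the remark following Theorem~\ref{thm:thm_cut}), and $M=N/(4G)$ is the natural crossover point between the two branches. Once these floor-level inequalities are checked in each regime, the theorem follows immediately; no further estimation is required beyond this routine accounting.
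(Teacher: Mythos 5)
Your large-$M$ branch ($N/(4G)\leq M\leq N/(2G)$, $s=1$) is fine under the paper's continuous-relaxation conventions, and your choice $s=\lfloor N/(2GM)\rfloor$ in the small-$M$ branch is the natural one. But the step you dismiss as ``routine accounting'' is where the argument actually breaks. From $Gs\leq N/(2M)$ you only get $N/(Gs)\geq 2M$, hence
\[
\left\lfloor \frac{N}{Gs}\right\rfloor \;\geq\; \frac{N}{Gs}-1 \;\geq\; 2M-1,
\]
\emph{not} $\lfloor N/(Gs)\rfloor\geq 2M$. Consequently the best you can conclude is
\[
R_{CB}(s)\;\geq\; Gs\left(1-\frac{M}{2M-1}\right)\;=\;Gs\cdot\frac{M-1}{2M-1},
\]
and the factor $\frac{M-1}{2M-1}$ is not bounded below by $1/2$; it tends to $0$ as $M\to 1^+$. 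The hypothesis only guarantees $M\geq N/K>1$, so $M$ can be arbitrarily close to $1$ while still satisfying $M<N/(4G)$ (take $G$ small and $N$ large). In that corner your lower bound on $R^*(M)$ degenerates and the ratio $R^{(1)}_{\mathrm{peak}}/R^*(M)$ is no longer controlled by $8$ — the deficit can be made arbitrarily large. Even for moderate non-integer $M$ (say $M=10.3$, where $\lfloor N/(Gs)\rfloor$ may equal $20<2M$) the constant $8$ is already slightly exceeded, so this is not a removable technicality within your decomposition.

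The paper's proof circumvents exactly this obstruction by splitting on the \emph{absolute} size of $M$ rather than on $M$ versus $N/(4G)$: for $M\geq 1.5$ it uses $s_0=N/(2GM)$ and absorbs the $-1$ in the denominator into the explicit factor $\frac{M-1/2}{M-1}\leq 2$ (which is only bounded because $M\geq 1.5$), while for $M\leq 1.5$ it switches to the fixed choice $s_1=N/(5G)$, for which $N/(Gs_1)-1=4$ and $1-M/4\geq 5/8$ regardless of how close $M$ is to $1$; a third sub-case ($K\leq N/5$, where $s_1$ exceeds the admissible range, handled with $s_2=K/G$) completes the argument. To repair your proof you would need an analogous special treatment of small $M$ — your current two-regime split cannot deliver the uniform factor of $8$.
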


\begin{proof}
Loosening the bound in Eq.~(\ref{eq:Rpeak_bound}) results in
\begin{align}
R^*(M) &\geq \max_{s\in \left\{ 1,\ldots, \frac{\min(K,N)}{G} \right\} } R_{CB}(s)\\
&\geq \max_{0\leq s \leq \frac{\min(K,N)}{G}} \Big(  Gs - \frac{GsM}{ \frac{ N}{Gs}-1  }\Big)\\
&\geq \Big(  Gs_0 - \frac{Gs_0M}{ \frac{ N}{Gs_0}-1  }\Big),\label{eq:Rs_general}
\end{align}
where the last inequality holds for any $0\leq s_0\leq\frac{\min(K,N)}{G}$.

We first consider the case when $M\geq 1.5$. Let $s_0=\frac{N}{2GM}$ and observe that $0\leq s_0\leq \frac{\min(K,N)}{G}$ as long as $\max(1.5,\frac{N}{K})\leq M \leq\frac{N}{2G}$. Hence
\begin{align}
R^*(M) &\geq \Big(  Gs_0 - \frac{Gs_0M}{ \frac{ N}{Gs_0}-1  }\Big)\\
&\geq N\frac{M-1}{2M(2M-1)}.\label{eq:Rs0}
\end{align}
Dividing Eq.~(\ref{eq:R1peak}) by Eq.~(\ref{eq:Rs0}) and imposing $M\geq 1.5$ yields
\begin{align}
    \frac{R^{(1)}_\mathrm{peak}}{R^*(M)} &\leq \frac{N-M}{N}\cdot\frac{4KM}{KM+N}\cdot\frac{M-\frac{1}{2}}{M-1}\\
    &\leq 1\cdot 4 \cdot 2,
\end{align}
which proves that $R^{(1)}_\mathrm{peak}$ is within a factor of 8 from the optimal when $M\geq 1.5$.

Next consider the case when $M\leq 1.5$. Let $s_1=\frac{N}{5G}$ and assume for now that $0 \leq s_1\leq \frac{\min(K,N)}{G}$. Inserting this value in Eq.~(\ref{eq:Rs_general}) yields
\begin{align}\label{eq:Rs1}
R^*(M) &\geq \Big(  \frac{N}{5} - \frac{NM}{20}\Big)
\end{align}
and therefore
\begin{align}
    \frac{R^{(1)}_\mathrm{peak}}{R^*(M)} &\leq\frac{N-M}{N}\cdot \frac{K}{N+KM}\cdot\frac{20}{4-M}\\
    &\leq 1\cdot 1 \cdot 8,
\end{align}
which proves that $R^{(1)}_\mathrm{peak}$ is within a factor of 8 from the optimal when $M\leq 1.5$, as long as $s_1\leq \frac{\min(K,N)}{G}$.

It only remains to show that the theorem holds when $M\leq 1.5$ and $s_1\geq \frac{\min(K,N)}{G}$. This can only happen if $K\leq\frac{N}{5}$. Let $s_2=\frac{K}{G}$, for which Eq.~(\ref{eq:Rs_general}) becomes
\begin{align}\label{eq:Rs2}
R^*(M) &\geq K\frac{N-K(M+1)}{N-K}.
\end{align}
Therefore
\begin{align}
    \frac{R^{(1)}_\mathrm{peak}}{R^*(M)} &\leq\frac{N-M}{N+KM}\cdot \frac{N-K}{N-K(M+1)}\\
    &\leq 1\cdot \frac{0.8\ N}{0.5\ N}\\
    &\leq \frac{8}{5},
\end{align}
which proves that $R^{(1)}_\mathrm{peak}$ is within a factor of 1.6 from the optimal when $M\leq 2$ and $s_1\geq \frac{\min(K,N)}{G}$.
\end{proof}



\begin{theorem}\label{thm:cut_set_R_3}
For the heterogeneous user profile model with $K$ users, a database
of $N = N_c+GN_u$ files, and a local cache size of $M$ files at each user with $\frac{G}{K}(N_c+N_u)\leq M \leq \frac{N}{2G}$, it can be written that
\begin{equation}
  \frac{R_\mathrm{peak}^{(3)}}{R^*(M)} \leq 8\frac{K}{G}.
\end{equation}
The restriction $\frac{G}{K}(N_c+N_u)\leq M$ is imposed so that $t_3\geq 1$ and $M \leq \frac{N}{2G}$ to ensure that $R_{CB}(s)$ is not negative.
\end{theorem}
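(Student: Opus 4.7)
The plan is to mirror the three sub-case architecture used in the proof of Theorem~\ref{thm:cut_set_factor1}, reusing its lower bounds on $R^*(M)$ while replacing the analysis of $R_\mathrm{peak}^{(1)}$ with a coarse upper bound on $R_\mathrm{peak}^{(3)}$. The first step is to extract such a bound directly from Eq.~(\ref{eq:R3}): rewriting $R_\mathrm{peak}^{(3)}=(K-Gt_3)/(t_3+1)$ and checking that this expression is decreasing in $t_3$ (its derivative equals $-(K+G)/(t_3+1)^2$), we obtain $R_\mathrm{peak}^{(3)}\leq K$ by evaluating at $t_3=0$. This coarseness, namely that Scheme~3 sacrifices all cross-class multicasting while the cut-set bound does not, is precisely what introduces the extra factor $K/G$ in the target ratio relative to Theorem~\ref{thm:cut_set_factor1}.

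The second step is to plug this bound against the three converse estimates produced in the proof of Theorem~\ref{thm:cut_set_factor1}. For $M\geq 1.5$, with $s_0=N/(2GM)$, we have $R^*(M)\geq N(M-1)/(2M(2M-1))$; combined with $N\geq 2GM$ (from $M\leq N/(2G)$), this sharpens to $R^*(M)\geq G(M-1)/(2M-1)\geq G/4$ for $M\geq 1.5$, and hence $R_\mathrm{peak}^{(3)}/R^*(M)\leq 4K/G$. For $M\leq 1.5$ with $s_1=N/(5G)\leq\min(K,N)/G$, the bound $R^*(M)\geq N(4-M)/20\geq N/8$ combined with $N\geq N_c\geq K\geq G$ (using the standing assumption $K\leq N_c$ and that each class has at least one user) yields $R_\mathrm{peak}^{(3)}/R^*(M)\leq 8K/N\leq 8K/G$. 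Finally, for $M\leq 1.5$ with $K\leq N/5$, substituting $s_2=K/G$ into Eq.~(\ref{eq:Rs_general}) yields $R^*(M)\geq K(N-K(M+1))/(N-K)$, and since $N-K(M+1)\geq N/2$ in this regime we get $R^*(M)\geq K/2$, whence $R_\mathrm{peak}^{(3)}/R^*(M)\leq 2\leq 8K/G$. All three sub-cases therefore respect the $8K/G$ ceiling, completing the argument.

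The main obstacle is that the crude inequality $R_\mathrm{peak}^{(3)}\leq K$ cannot be meaningfully sharpened against the heterogeneous cut-set converse, which is itself blind to the per-class partition that Scheme~3 rigidly respects. Any attempt to beat the $K/G$ factor would require either a class-aware strengthening of the converse, or a more delicate treatment of $R_\mathrm{peak}^{(3)}$ as a function of $t_3$ in each sub-case; neither appears profitable, since the $K/G$ loss is intrinsic to Scheme~3's refusal to code across classes. For this reason the theorem merely states the same factor of $8$ as in Theorem~\ref{thm:cut_set_factor1}, scaled by $K/G$, even though Case~A alone already yields the tighter constant $4K/G$.
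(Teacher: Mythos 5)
Your proof is correct and follows essentially the same route as the paper's: the same three sub-cases ($M\geq 1.5$; $M\leq 1.5$ with $K\geq N/5$; $M\leq 1.5$ with $K\leq N/5$) reusing the converse estimates from the proof of Theorem~\ref{thm:cut_set_factor1}, with the only difference being that you replace the paper's factor-by-factor bounding of the exact expression $R_\mathrm{peak}^{(3)}=KG\frac{N_c+N_u-M}{KM+G(N_c+N_u)}$ with the cruder uniform bound $R_\mathrm{peak}^{(3)}\leq K$. Your constants ($4K/G$, $8K/N\leq 8K/G$, and $2$) all sit within the claimed $8K/G$ ceiling, matching the paper's conclusions in each case.
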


\begin{proof}
For $M\geq 1.5$, Eq.~(\ref{eq:Rs0}) in the proof of Theorem~\ref{thm:cut_set_factor1} showed that
\begin{align}
R^*(M) &\geq N\frac{M-1}{2M(2M-1)}.
\end{align}
Dividing Eq.~(\ref{eq:R3}) by the above expression gives
\begin{align}
    \frac{R^{(3)}_\mathrm{peak}}{R^*(M)} &\leq \frac{2KM}{N}\cdot\frac{G(N_c+N_u-M)}{G(N_c+N_u)+KM}\cdot\frac{2M-1}{M-1}\\
    &\leq \frac{K}{G}\cdot 1 \cdot 4.
\end{align}

For $M\leq 1.5$, Eq.~(\ref{eq:Rs1}) in the proof of Theorem~\ref{thm:cut_set_factor1} showed that
\begin{align}
R^*(M) &\geq \Big(  \frac{N}{5} - \frac{NM}{20}\Big)
\end{align}
as long as $K\geq\frac{N}{5}$. Dividing Eq.~(\ref{eq:R3}) by the above expression gives
\begin{align}
    \frac{R^{(3)}_\mathrm{peak}}{R^*(M)}
    &\leq \frac{K}{N}\cdot\frac{G(N_c+N_u-M)}{G(N_c+N_u)+KM}\cdot\frac{20}{4-M}\\
    &\leq \frac{K}{N}\cdot 1\cdot 8\\
    &\leq 8\frac{K}{G}.
\end{align}

Finally, for $M\leq 1.5$ and $K\leq\frac{N}{5}$ we can divide Eq.~(\ref{eq:R3}) by Eq.~(\ref{eq:Rs2}) to obtain
\begin{align}
    \frac{R^{(3)}_\mathrm{peak}}{R^*(M)}
    &\leq \frac{N-K}{N-K(M+1)}\cdot\frac{G(N_c+N_u-M)}{G(N_c+N_u)+KM}\\
    &\leq \frac{N-K}{N-K(2.5)}\cdot 1\\
    &\leq \frac{8}{5}.
\end{align}
There must be more users $K$ than classes $G$, so $\frac{K}{G}\geq1$ and the theorem is proved.
%
%
%
%
%
%
\end{proof}

\begin{theorem}\label{thm:cut_set_factor2}
For the heterogeneous user profile model with $K$ users, a database
of $N = N_c+GN_u$ files, and a local cache size of $M$ files at each user with $\frac{G}{K}(N_c+N_u)\leq M \leq \frac{N}{2G}$, it can be written that
\begin{equation}\label{eq:theorem4}
    \frac{R_\mathrm{peak}^{(2)}}{R^*(M)} < 8+8\frac{K}{G}.
\end{equation}
The restriction $\frac{G}{K}(N_c+N_u)\leq M$ is imposed so that $t_3\geq 1$ and $M \leq \frac{N}{2G}$ to ensure that $R_{CB}(s)$ is not negative.
\end{theorem}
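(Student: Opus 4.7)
The plan is to exploit the freedom to choose the cache split $x$ in Scheme 2 and reduce the problem to Theorems~\ref{thm:cut_set_factor1} and~\ref{thm:cut_set_R_3}. Specifically, since $R_\mathrm{peak}^{(2)} = \min_x\max_\alpha R^{(2)}(x,\alpha)$, any particular choice of $x$ yields an upper bound; I would choose the proportional split
\begin{equation*}
x = \frac{N_c}{N},
\end{equation*}
which makes $t_c = K\frac{Mx}{N_c} = \frac{KM}{N}$ and $t_u = \frac{K}{G}\frac{M(1-x)}{N_u} = \frac{KM}{N}$, so that both parameters coincide with $t_1$ used in Scheme~1.

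Next, I would decouple the $\alpha$-optimization by the crude bound $\max_\alpha[R_c(x,\alpha)+GR_u(x,\alpha)] \le \max_\alpha R_c(x,\alpha) + G\max_\alpha R_u(x,\alpha)$. Inspecting Eq.~(\ref{eq:R2expand}) term by term, $R_c$ is monotonically decreasing in $\alpha$ (its maximum is at $\alpha=0$, yielding $(K-t_c)/(t_c+1)$) while $R_u$ is monotonically increasing in $\alpha$ (its maximum is at $\alpha=K/G$, yielding $(K/G-t_u)/(t_u+1)$). Substituting the chosen $x$ gives
\begin{equation*}
R_\mathrm{peak}^{(2)} \le \frac{K-t_1}{t_1+1} + G\,\frac{\frac{K}{G}-t_1}{t_1+1}.
\end{equation*}

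The first term is exactly $R_\mathrm{peak}^{(1)}$ from Eq.~(\ref{eq:R1peak}). For the second term, I would observe that the function $t\mapsto (K/G-t)/(t+1)$ has derivative $-(K/G+1)/(t+1)^2<0$, so it is decreasing in $t$. A direct comparison of $t_1=KM/N=KM/(N_c+GN_u)$ and $t_3=KM/(G(N_c+N_u))$ shows $t_1\ge t_3$ whenever $G\ge 1$, hence
\begin{equation*}
G\,\frac{\tfrac{K}{G}-t_1}{t_1+1}\le G\,\frac{\tfrac{K}{G}-t_3}{t_3+1} = R_\mathrm{peak}^{(3)}.
\end{equation*}
Combining the two bounds yields the clean inequality $R_\mathrm{peak}^{(2)} \le R_\mathrm{peak}^{(1)} + R_\mathrm{peak}^{(3)}$, and dividing by $R^*(M)$ and invoking Theorems~\ref{thm:cut_set_factor1} and~\ref{thm:cut_set_R_3} gives $R_\mathrm{peak}^{(2)}/R^*(M) \le 8 + 8K/G$, as desired.

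The only subtlety is ensuring that the MN peak-rate formula is applicable at $t_c$ and $t_u$, i.e.\ that $1\le t_c\le K-1$ and $1\le t_u\le K/G-1$. The lower bounds follow from the hypothesis $M\ge \tfrac{G}{K}(N_c+N_u)$, which implies $t_c=t_u=KM/N \ge G(N_c+N_u)/N \ge 1$ (since $GN_c\ge N_c$), and the upper bounds follow from $M\le N/(2G)$, which gives $t_c=t_u \le K/(2G)\le \min(K-1,K/G-1)$ in the nontrivial regime $K/G\ge 2$. Thus the main technical obstacle — ensuring that the convenient choice $x=N_c/N$ stays inside the admissible range of Maddah-Ali–Niesen parameters for both the common and the unique sub-schemes — is handled directly by the hypotheses of the theorem.
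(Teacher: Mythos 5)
Your proposal is correct and follows essentially the same route as the paper: fix $x=N_c/N$ so that $t_c=t_u=t_1$, decouple the maximization over $\alpha$ to bound the common part by $R_\mathrm{peak}^{(1)}$ and the unique part by $G\frac{K/G-t_1}{t_1+1}\le R_\mathrm{peak}^{(3)}$ (using $t_3\le t_1$ and monotonicity), and then invoke Theorems~\ref{thm:cut_set_factor1} and~\ref{thm:cut_set_R_3}. Your added check that $t_c$ and $t_u$ lie in the admissible Maddah-Ali--Niesen range under the theorem's hypotheses is a small bonus the paper omits.
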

\begin{proof}
The peak rate of Scheme 2 can be bound as follows:
\begin{align}
    R_\mathrm{peak}^{(2)} &= \min_x\max_\alpha R^{(2)}(x,\alpha)\\
    &\leq \max_\alpha R^{(2)}\left(x=\frac{N_c}{N},\alpha\right)\\
    &\leq R_c\left(x=\frac{N_c}{N},\alpha=0\right)+R_u\left(x=\frac{N_c}{N},\alpha=\frac{K}{G}\right)\\
    &\leq R_\mathrm{peak}^{(1)}+G\frac{\frac{K}{G}-t_1}{t_1+1},
\end{align}
where $N=N_c+GN_u$ and $t_1=\frac{KM}{N}$. Since $t_3\leq t_1$ and $R_\mathrm{peak}^{(3)}$ decreases monotonically with $t_3$ we can conclude that
\begin{align}
    R_\mathrm{peak}^{(2)} &\leq R_\mathrm{peak}^{(1)} + R_\mathrm{peak}^{(3)}.
\end{align}
Finally, we apply Theorems~\ref{thm:cut_set_factor1}~and~\ref{thm:cut_set_R_3} to obtain Eq.~(\ref{eq:theorem4}).
\end{proof}

We do not attempt to characterize a bound for average rate because it would depend on the popularity distribution of the files. A bound for uniform-average rate could be derived, but we believe that it would not provide valuable insights for the general case.

\section{Results}
\label{s-Results}

This first part of this section studies how to optimize the distribution of cache between common and unique files in Scheme 2 so that the peak rate and uniform-average rate are minimized. We discover that when users' cache storage is small, devoting all the cache to common files will minimize both the peak and the average rate of transmission. The second part of the section provides detailed comparisons between the peak rates of the three schemes proposed in Section~\ref{s-schemes} and analyzes which scheme offers the best performance for each value of $M$. A partial summary of results can be found in Table~\ref{tab:my_label}.

 Our previous conference paper~\cite{zhang202averagerate} provided some partial and asymptotic results for uniform-average rate, but we have decided not to include those here, postponing them to future work on a separate paper.

\subsection{Optimizing $x$ for Scheme 2}

When $M$ is large, users are able to cache most of the files and the choice of $x$ is less relevant. Furthermore, scenarios where caches are almost as large as the whole library rarely come up in practical applications. Hence, we will focus our analysis on the case with relatively small $M$ compared with the size of the library $N$.

\begin{theorem}\label{pr:schemeII_peak}
When $M \leq\frac{N_c}{K} \left[ \sqrt{\frac{N_u(K+1)}{N_c(\frac{K}{G}+1)}}-1 \right]$, the peak rate of Scheme~2 is minimized by devoting all the cache to common files ($x=1$).
\end{theorem}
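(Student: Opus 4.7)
The plan is to show $x=1$ is a minimizer of $\max_\alpha R^{(2)}(x,\alpha)$ through a one-sided derivative analysis at the boundary $x=1$, with the stated threshold on $M$ arising as the balance point of the two competing marginal rates.

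First I would pin down the worst-case $\alpha^*(1)$. Setting $x=1$ forces $t_u=0$, so Eq.~(\ref{eq:rate_t<1}) with $p_u=1$ gives $R_u(1,\alpha)=\alpha$, whence
\begin{equation*}
R^{(2)}(1,\alpha)=\frac{\binom{K}{t_c+1}-\binom{G\alpha}{t_c+1}}{\binom{K}{t_c}}+G\alpha,\qquad t_c=\frac{KM}{N_c}.
\end{equation*}
The Pascal identity $\binom{j+1}{t_c+1}-\binom{j}{t_c+1}=\binom{j}{t_c}$ applied to the discrete increments in $j=G\alpha$ shows that the difference $R^{(2)}(1,\alpha+1/G)-R^{(2)}(1,\alpha)=1-\binom{j}{t_c}/\binom{K}{t_c}$ is non-negative for $j\le K$. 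Hence $R^{(2)}(1,\alpha)$ is non-decreasing in $\alpha$, $\alpha^*(1)=K/G$, and $R^{(2)}_\mathrm{peak}(1)=K$.

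Second, I would compute $\partial R^{(2)}(x,\alpha)/\partial x$ along the two extreme adversaries using the chain rule with $t_c=KMx/N_c$ and $t_u=(K/G)M(1-x)/N_u$ together with the continuous form $(K-t)/(t+1)$ of Eq.~(\ref{eq:MNsimplified}). The slopes at $x=1$ come out to
\begin{equation*}
\frac{\partial R^{(2)}(x,0)}{\partial x}\bigg|_{x=1}=-\frac{(K+1)KM\,N_c}{(KM+N_c)^2},\qquad \frac{\partial R^{(2)}(x,K/G)}{\partial x}\bigg|_{x=1}=+\frac{(K+G)KM}{GN_u},
\end{equation*}
one negative and one positive. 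Equating their magnitudes gives $(KM+N_c)^2=N_cN_u(K+1)/(K/G+1)$, which solved for $M$ reproduces precisely the threshold $M=\tfrac{N_c}{K}\bigl[\sqrt{N_u(K+1)/(N_c(K/G+1))}-1\bigr]$ stated in the theorem. Under the hypothesis that $M$ does not exceed this value, the common-branch slope magnitude dominates, so moving $x$ infinitesimally below $1$ raises $R^{(2)}(x,0)$ at least as fast as it lowers $R^{(2)}(x,K/G)$, and the envelope $\max_\alpha R^{(2)}(x,\alpha)$ cannot fall below $K=R^{(2)}_\mathrm{peak}(1)$.

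The main technical obstacle is promoting this local statement to a global one on $[0,1]$, since the argmax $\alpha^*(x)$ may shift with $x$ and an interior $\alpha\in(0,K/G)$ could in principle deliver a smaller peak. I would handle this by showing that the map $\alpha\mapsto R^{(2)}(x,\alpha)$ is concave enough on $[0,K/G]$ that its maximum is always attained at one of the two endpoints, which reduces the envelope analysis to the two extreme cases already settled. Combined with the slope inequality above, this certifies $R^{(2)}_\mathrm{peak}(x)\ge R^{(2)}_\mathrm{peak}(1)$ for every $x\in[0,1]$, yielding $x=1$ as a global minimizer under the stated bound on $M$.
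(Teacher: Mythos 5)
Your computation of the two boundary slopes is correct, and it is in essence the same comparison the paper makes: your $\partial R^{(2)}(x,0)/\partial x\big|_{x=1}$ and $\partial R^{(2)}(x,K/G)/\partial x\big|_{x=1}$ are exactly the quantities $Y_1$ and $Y_2$ of the paper's appendix evaluated at $x=1$, and equating their magnitudes reproduces the stated threshold on $M$. The problem lies in how you turn that comparison into a statement about $\min_x\max_\alpha R^{(2)}(x,\alpha)$, and there the proposal has two genuine gaps. First, the planned reduction of the inner maximization to the endpoints $\alpha\in\{0,K/G\}$ goes the wrong way: $\alpha\mapsto R^{(2)}(x,\alpha)$ is \emph{concave} (the increments $R_u(\alpha+1)-R_u(\alpha)$ are decreasing -- this is exactly the submodularity used in the proof of Theorem~\ref{thm:alphapeak}), and a concave function attains its maximum in the interior, not at an endpoint. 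Indeed, Theorem~\ref{thm:alphapeak} itself places the worst-case $\alpha$ in the open interval $(0,\frac{K}{G})$, so the envelope cannot be reduced to the two extreme adversaries.

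Second, even restricted to those two branches, the slope-balancing step does not deliver the conclusion. At $x=1$ only the branch $\alpha=K/G$ attains the maximum $K$, while $R^{(2)}(1,0)=(K-t_c)/(t_c+1)$ sits strictly below $K$. Near $x=1$ the envelope of the two branches therefore coincides with the active branch, whose slope in $x$ is positive, so that envelope \emph{decreases} as $x$ moves below $1$; comparing the magnitude of the inactive branch's slope against the active one's proves nothing until the inactive branch catches up. Your argument is also purely local at $x=1$ and could not by itself exclude a smaller peak elsewhere in $[0,1]$. The paper avoids both issues by arguing globally: it bounds $\partial R_c/\partial x$ and $\partial(GR_u)/\partial x$ uniformly in $\alpha$ by $Y_1(x)<0$ and $Y_2(x)>0$, observes that $|Y_1(x)|$ is minimized and $|Y_2(x)|$ maximized at $x=1$, and hence needs to verify $|Y_1|\geq|Y_2|$ only at $x=1$ to conclude that the relevant derivative is non-positive for every $x\in[0,1]$ -- a monotonicity statement that makes $x=1$ a global minimizer. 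If you want to keep your boundary calculation, you should embed it in such a uniform-in-$(x,\alpha)$ derivative bound rather than in a local envelope argument at $x=1$.
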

\begin{proof}
See Appendix.
\end{proof}


\begin{theorem}\label{pr:propI}
When the number of users $K$ is large and $M\leq\frac{1}{K}\min(N_c,GN_u)$, the uniform-average rate of Scheme~2 is minimized by devoting all the cache to either common or unique files. Specifically, it should all be devoted to common files ($x=1$) when
\begin{equation}\label{eq:avg_devotecommon}
\frac{N_u}{N_c}> G\cdot\frac{\mathbb{E}^2[N_u(\vec{d})] + \mathbb{E}[N_u(\vec{d})] }{ \mathbb{E}^2[N_c(\vec{d})]+\mathbb{E}[N_c(\vec{d})] },
\end{equation}
otherwise it should all be devoted to unique files ($x=0$). In Eq.~(\ref{eq:avg_devotecommon}),
\begin{align}
\mathbb{E}[N_c(\vec{d})] &= N_c\left[ 1- \left( \frac{N_c+N_u-1}{N_c+N_u} \right)^K \right],\\
\mathbb{E}[N_{u}(\vec{d})] &= N_u\left[ 1- \left( \frac{N_c+N_u-1}{N_c+N_u} \right) ^{K/G} \right].
\end{align}
\end{theorem}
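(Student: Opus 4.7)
The plan is to show that in the stated parameter regime the uniform-average rate $\tilde{R}^{(2)}$ is an affine function of $x$, so its minimum on $[0,1]$ must lie at an endpoint. First I would verify that $M\leq\frac{1}{K}\min(N_c,GN_u)$ implies $t_c=\frac{KMx}{N_c}\leq 1$ and $t_u=\frac{KM(1-x)}{GN_u}\leq 1$ for every $x\in[0,1]$, so that Eq.~(\ref{eq:rate_t<1}) applies to both the common and the per-class unique deliveries of Scheme~2.

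Next I would expand the binomials in Eq.~(\ref{eq:Rmreq}) at $t=1$ to obtain the closed form $R(K,\vec{d},1)=N(\vec{d})-\frac{N(\vec{d})(N(\vec{d})+1)}{2K}$, plug it into Eq.~(\ref{eq:rate_t<1}), take expectation over the uniform distribution on demand sets, and write
\begin{align*}
R_c &= \mathbb{E}[N_c(\vec{d})] - \tfrac{Mx}{2N_c}\,\mathbb{E}\bigl[N_c(\vec{d})(N_c(\vec{d})+1)\bigr], \\
R_{u_i} &= \mathbb{E}[N_{u_i}(\vec{d})] - \tfrac{M(1-x)}{2N_u}\,\mathbb{E}\bigl[N_{u_i}(\vec{d})(N_{u_i}(\vec{d})+1)\bigr].
\end{align*}
Summing over classes via Eq.~(\ref{eq:Sch2eq}) and using class symmetry produces an affine function of $x$ whose slope is proportional to $G\,\frac{\mathbb{E}[N_u(\vec{d})(N_u(\vec{d})+1)]}{N_u} - \frac{\mathbb{E}[N_c(\vec{d})(N_c(\vec{d})+1)]}{N_c}$. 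Comparing the two endpoints shows that $x=1$ is optimal exactly when this slope is nonpositive, which rearranges to the theorem's inequality except that $\mathbb{E}[N(N+1)]$ appears in place of $\mathbb{E}[N]^{2}+\mathbb{E}[N]$.

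Closing this gap is the main obstacle and is where the ``$K$ large'' hypothesis really enters. Writing $N_c(\vec{d})=\sum_{f}\mathbf{1}\{f\text{ requested}\}$ over common files, the per-file indicators are negatively associated under the uniform request model (two distinct files cannot be picked by the same user), so $\mathrm{Var}(N_c(\vec{d}))$ is bounded by the sum of its marginal variances and hence by $\mathbb{E}[N_c(\vec{d})]$. As $K$ grows, $\mathbb{E}[N_c(\vec{d})]\to N_c$ while the variance stays $O(N_c)$, so $\mathrm{Var}(N_c)/\mathbb{E}[N_c]^{2}\to 0$ and $\mathbb{E}[N_c(N_c+1)]=\mathbb{E}[N_c]^{2}+\mathbb{E}[N_c]+o(\mathbb{E}[N_c]^{2})$; the same argument applied with $K/G$ users handles $N_{u_i}(\vec{d})$. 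Some care is needed because $\mathbb{E}[N_c]$ itself may stay small when the library is dominated by unique files, so ``large $K$'' should be read as ``$K/G$ large relative to $N_c+N_u$''. The closed-form expressions for $\mathbb{E}[N_c(\vec{d})]$ and $\mathbb{E}[N_u(\vec{d})]$ in the theorem statement then follow directly from linearity of expectation, noting that a common file is missed by each of the $K$ users with probability $\frac{N_c+N_u-1}{N_c+N_u}$, while a unique file of class $i$ is visible only to the $K/G$ users of that class.
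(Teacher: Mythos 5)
Your proposal follows essentially the same route as the paper's proof: the hypothesis on $M$ forces $t_c,t_u\le 1$, so Eq.~(\ref{eq:rate_t<1}) makes $R^{(2)}_{\mathrm{avg}}$ affine in $x$, the minimum sits at an endpoint, the sign of the slope gives the stated threshold, and the closed forms for $\mathbb{E}[N_c(\vec d)]$ and $\mathbb{E}[N_u(\vec d)]$ come from indicator variables. The only difference is one of bookkeeping: the paper substitutes the means into the rate before differentiating, whereas you keep the exact second moments $\mathbb{E}[N(N+1)]$ and then argue via a variance bound that they reduce to $\mathbb{E}^2[N]+\mathbb{E}[N]$ for large $K$ --- a somewhat more explicit justification of the same law-of-large-numbers step the paper invokes.
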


\begin{proof}
See Appendix.
\end{proof}
Theorem~\ref{pr:propI} generalizes Prop. 3 from paper~\cite{chang2020twousers}, where there exist two classes with one user each. When the number of common files is large and the cache size is below half of the common files, the users should only cache common files. 





\begin{corollary}
If the users' devices have relatively small storage 
and the number of common files is not too large,
it is recommended for the users to devote all their cache to common files and transmit the unique files uncoded.
\end{corollary}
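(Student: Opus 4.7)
The plan is to derive the corollary directly from Theorems~\ref{pr:schemeII_peak} and~\ref{pr:propI}, which are the peak-rate and uniform-average-rate versions of the statement ``devote all cache to common files.'' Since the corollary is an informal summary, the proof is mostly a matter of translating the two technical hypotheses into a single regime on $(M,N_c,N_u,G,K)$ and checking that this regime is non-empty.

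First, I would interpret ``relatively small storage.'' Theorem~\ref{pr:schemeII_peak} requires
$M \leq \frac{N_c}{K}\bigl[\sqrt{N_u(K+1)/(N_c(K/G+1))}-1\bigr]$, and Theorem~\ref{pr:propI} requires $M \leq \frac{1}{K}\min(N_c,GN_u)$. Both are of the order $O(1/K)$ times an $O(\sqrt{N_u/N_c})$ or $O(\min(N_c,GN_u))$ factor, so the tighter of the two can be taken as the quantitative meaning of ``small $M$.'' Next, I would interpret ``$N_c$ not too large.'' The square-root bracket in Theorem~\ref{pr:schemeII_peak} is positive only when $N_u(K+1) > N_c(K/G+1)$, which for moderately large $K$ becomes roughly $N_c < GN_u$; this is the precise cap on how large $N_c$ is allowed to be for the peak-rate branch.

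Second, I would verify the Theorem~\ref{pr:propI} inequality
\[
\frac{N_u}{N_c} > G\cdot\frac{\mathbb{E}^2[N_u(\vec d)] + \mathbb{E}[N_u(\vec d)]}{\mathbb{E}^2[N_c(\vec d)] + \mathbb{E}[N_c(\vec d)]}
\]
in the same regime, by substituting the closed-form expressions for $\mathbb{E}[N_c(\vec d)]$ and $\mathbb{E}[N_u(\vec d)]$ recalled in the theorem. In the sub-regime where $K$ is large but $K \ll N_c+N_u$, a Taylor expansion gives $\mathbb{E}[N_c(\vec d)]\approx KN_c/(N_c+N_u)$ and $\mathbb{E}[N_u(\vec d)]\approx (K/G)N_u/(N_c+N_u)$, and the inequality reduces to $GN_c > N_u$, i.e.\ $N_c > N_u/G$. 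Combined with the peak-rate cap $N_c < GN_u$, this yields the non-empty band $N_u/G < N_c < GN_u$ (non-empty whenever $G>1$) in which both theorems simultaneously declare $x=1$ optimal. The corollary then follows: $x=1$ means no portion of the cache is allocated to unique files, so every request for a unique file must be served by an uncoded transmission.

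The main obstacle is the tension between the two hypotheses: the peak-rate condition pushes $N_c$ downward (few common files), whereas the average-rate condition, once the expectations are evaluated in the relevant $K$-regime, pushes $N_c$ upward (common files must dominate each user's request probability enough to benefit from global caching). A careful proof therefore cannot simply chain the theorems; it must exhibit and argue the non-emptiness of the intersection of their hypotheses, and flag that in the opposite sub-regime ($K \gg N_c+N_u$, where $\mathbb{E}[N_c(\vec d)]\to N_c$ and $\mathbb{E}[N_u(\vec d)]\to N_u$) the average-rate condition becomes $N_c \gtrsim GN_u$ and overlaps the peak-rate cap only marginally, which is why the corollary is phrased informally rather than as a sharp theorem.
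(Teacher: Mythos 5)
Your proposal is correct and follows the same route the paper intends: the corollary is stated without proof as an informal summary of Theorems~\ref{pr:schemeII_peak} and~\ref{pr:propI}, and your derivation simply makes that implication explicit. Your extra step of checking that the two hypotheses are simultaneously satisfiable --- the peak-rate bound requiring roughly $N_c < GN_u$ and the average-rate condition in the $K \ll N_c+N_u$ regime reducing to $N_c > N_u/G$, giving a non-empty band whenever $G>1$ --- is a worthwhile refinement that the paper omits.
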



%
%


\subsection{Peak Rate Comparison}

First, we compare the peak rate of Schemes 1 and 3, since they have the simplest expressions.

\begin{theorem}\label{thm:common_unique}
When $M$ is large enough, Scheme 3 offers lower peak rate than Scheme 1, and vice versa. Specifically,
\begin{equation}
R_\mathrm{peak}^{(3)}\leq R_\mathrm{peak}^{(1)}\qquad \Leftrightarrow \qquad M\geq N_c-\frac{GN_u}{K}.
\end{equation}
\end{theorem}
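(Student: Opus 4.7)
The plan is to reduce the claimed equivalence to a single linear condition on $M$ by directly manipulating the closed-form peak-rate expressions. The overall strategy is: form the difference $R_\mathrm{peak}^{(1)}-R_\mathrm{peak}^{(3)}$, clear denominators (which are positive), simplify, then substitute $t_1$ and $t_3$ in terms of $M$ and solve.

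Concretely, I would first rewrite $R_\mathrm{peak}^{(3)}$ from Eq.~(\ref{eq:R3}) as $(K-Gt_3)/(t_3+1)$ and express the inequality $R_\mathrm{peak}^{(3)}\leq R_\mathrm{peak}^{(1)}$ using Eq.~(\ref{eq:R1peak}) as the nonnegativity of $(K-t_1)(t_3+1)-(K-Gt_3)(t_1+1)$ (both $t_1+1$ and $t_3+1$ are strictly positive). Expanding both products and collecting like terms gives
\begin{align*}
(K-t_1)(t_3+1)-(K-Gt_3)(t_1+1) = t_3\bigl(K+G+(G-1)t_1\bigr) - t_1(K+1),
\end{align*}
so the comparison reduces to $t_3(K+G+(G-1)t_1)\geq t_1(K+1)$.

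Next, I would use $t_1=KM/N$ and $t_3=KM/[G(N_c+N_u)]$, observing that the ratio $t_3/t_1=N/[G(N_c+N_u)]$ is independent of $M$. Cancelling $t_1>0$ and multiplying through by $G(N_c+N_u)$ turns the condition into
\begin{align*}
N(G-1)t_1 \geq G(N_c+N_u)(K+1) - N(K+G).
\end{align*}
The key algebraic step is to show that the right-hand side factors neatly. Expanding with $N=N_c+GN_u$, the $GN_uK$ and $GN_c$ terms cancel, leaving $N_cK(G-1)-GN_u(G-1)=(G-1)(N_cK-GN_u)$. Dividing by $G-1>0$ (the case $G=1$ is trivial since then Schemes~1 and~3 coincide) and using $Nt_1=KM$ yields the clean equivalent condition $KM\geq N_cK-GN_u$, i.e.\ $M\geq N_c-GN_u/K$.

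The main obstacle is simply the careful bookkeeping in the factorization of the right-hand side; every other step is a sign-preserving cancellation. Before declaring victory I would confirm that the manipulation sits inside the regime $1\leq t_3\leq t_1\leq K-1$ where Eqs.~(\ref{eq:R1peak}) and~(\ref{eq:R3}) apply directly, and note that at the boundary regimes the adjustments in Eqs.~(\ref{eq:rate_t<1}) and~(\ref{eq:rate_t>K-1}) would need to be invoked; since these only affect extreme $M$, the threshold $M=N_c-GN_u/K$ remains the crossover point for all practically relevant parameter choices.
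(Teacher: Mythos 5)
Your proposal is correct and follows exactly the route the paper intends: its proof consists of the single remark that the result follows by ``simple manipulation'' of Eqs.~(\ref{eq:R1peak})~and~(\ref{eq:R3}), and your computation supplies that manipulation, with the cross-multiplication, the factorization of the right-hand side into $(G-1)(N_cK-GN_u)$, and the reduction to $KM\geq N_cK-GN_u$ all checking out. Your added caveats about $G=1$ (where the two schemes coincide and the stated biconditional degenerates) and about staying in the regime where Eqs.~(\ref{eq:R1peak}) and~(\ref{eq:R3}) apply without the adjustments of Eqs.~(\ref{eq:rate_t<1})--(\ref{eq:rate_t>K-1}) are appropriate and slightly more careful than the paper itself.
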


\begin{proof}
This theorem can be proved by simple manipulation of Eqs~(\ref{eq:R1peak})~and~(\ref{eq:R3}).
\end{proof}

\begin{corollary}
When $M$ is small, it is often beneficial for users to cache segments from undesired files, to increase multicasting opportunities. The loss in local caching gain is more than compensated by the gain in global caching gain~\cite{chang2019coded}.
\end{corollary}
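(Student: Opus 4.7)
The plan is to substitute the explicit expressions for $R_\mathrm{peak}^{(1)}$ and $R_\mathrm{peak}^{(3)}$ from Eqs.~(\ref{eq:R1peak}) and~(\ref{eq:R3}) into the inequality $R_\mathrm{peak}^{(3)}\leq R_\mathrm{peak}^{(1)}$, plug in $t_1=\frac{KM}{N_c+GN_u}$ and $t_3=\frac{KM}{G(N_c+N_u)}$, and algebraically isolate~$M$.

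First, I would rewrite $R_\mathrm{peak}^{(3)}=\frac{K-Gt_3}{t_3+1}$ so that both rates have the same functional form. Since $t_1+1$ and $t_3+1$ are strictly positive, cross-multiplication preserves the direction of the inequality, and after expanding $(K-Gt_3)(t_1+1)\leq(K-t_1)(t_3+1)$ and cancelling the constant $K$ contributions, I would regroup the remaining terms into
\begin{equation*}
K(t_1-t_3) \;-\; (G-1)\,t_1 t_3 \;+\; (t_1 - G t_3) \;\leq\; 0.
\end{equation*}

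Next I would substitute the explicit forms of $t_1$ and $t_3$. The three building blocks
$t_1-t_3 = \frac{(G-1)\,KM\,N_c}{G(N_c+GN_u)(N_c+N_u)}$,
$t_1 t_3 = \frac{(KM)^2}{G(N_c+GN_u)(N_c+N_u)}$,
and $t_1 - G t_3 = -\frac{(G-1)\,KM\,N_u}{(N_c+GN_u)(N_c+N_u)}$
share the same denominator (up to a factor of $G$) and a common $(G-1)$ factor, so after combining them the inequality collapses to
\begin{equation*}
\frac{(G-1)\,KM}{G(N_c+GN_u)(N_c+N_u)}\,\bigl[\,KN_c - GN_u - KM\,\bigr] \;\leq\; 0.
\end{equation*}
For $G>1$ the prefactor is strictly positive, so the inequality is equivalent to $KM \geq KN_c - GN_u$, i.e., $M \geq N_c - GN_u/K$, which is exactly the claimed threshold.

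The main obstacle is purely bookkeeping: it is easy to mis-sign a term while cross-multiplying, so I would verify each of the three intermediate differences separately before combining them, and I would double-check that the $(G-1)$ factor pulls out cleanly from $t_1-t_3$, from $t_1-Gt_3$, and from the linear relation tying $t_1$ and $Gt_3$. I would also handle the degenerate case $G=1$ in isolation, noting that Schemes~1 and~3 both reduce to MN's original scheme in that limit, so $R_\mathrm{peak}^{(1)}=R_\mathrm{peak}^{(3)}$ for every~$M$ and the statement of the theorem becomes vacuous. No auxiliary lemmas are required; the entire argument is a direct computation from the rate formulas already established in Section~\ref{s-schemes}.
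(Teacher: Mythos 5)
Your computation is correct and is exactly the ``simple manipulation of Eqs.~(\ref{eq:R1peak})~and~(\ref{eq:R3})'' that the paper invokes for the parent Theorem~\ref{thm:common_unique}; the corollary itself is just the interpretive reading of the resulting threshold $M\geq N_c-\frac{GN_u}{K}$ (for smaller $M$, Scheme~1, which caches undesired files, wins). Your handling of the degenerate case $G=1$ and the sign bookkeeping both check out, so no gap remains.
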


In Schemes 1 and 3, the number of users requesting common versus unique files is irrelevant, since segments from both files can be encoded together. In Scheme 2, however, it plays a major role. We now intend to show that in order to compute the peak rate, we only need to consider the case where the subdivision is the same for all user classes.



\begin{theorem}
\label{thm:alphapeak}
There exists a number $\alpha\in(0,\frac{K}{G})$ such that the peak rate for Scheme 2 is achieved when every class has $\alpha$ users requesting unique files.
\end{theorem}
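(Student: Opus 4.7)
The plan is to reformulate the peak-rate optimization in its most general form, where each class $i$ independently has $\alpha_i$ users requesting unique files and $K/G-\alpha_i$ requesting common files, and then to show by a concavity-plus-symmetry argument that the maximum is always attained at a symmetric point. Writing the rate in full generality,
\begin{equation*}
R^{(2)}(x,\alpha_1,\ldots,\alpha_G) = \frac{\binom{K}{t_c+1}-\binom{\sum_i\alpha_i}{t_c+1}}{\binom{K}{t_c}} + \sum_{i=1}^{G}\frac{\binom{K/G}{t_u+1}-\binom{K/G-\alpha_i}{t_u+1}}{\binom{K/G}{t_u}},
\end{equation*}
one sees that the common-file term depends on $\vec{\alpha}$ only through the aggregate $S=\sum_i\alpha_i$, so the problem decouples: first fix $S$ and maximize the separable unique-file sum, then maximize over $S$.

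With $S$ held fixed, the task becomes maximizing $\sum_{i=1}^{G} R_u(\alpha_i)$ subject to $\sum_i\alpha_i = S$ and $\alpha_i\in[0,K/G]$, where $R_u(\alpha)$ denotes the per-class unique-file rate. The key ingredient is the concavity of $R_u(\alpha)$. In the discrete setting, Pascal's identity gives
\begin{equation*}
R_u(\alpha+1)-R_u(\alpha) = \frac{\binom{K/G-\alpha-1}{t_u}}{\binom{K/G}{t_u}},
\end{equation*}
which is strictly decreasing in $\alpha$, so $R_u$ is discretely concave; the Gamma-function interpolation of Eq.~(\ref{eq:fact_gamma_relaxation}) inherits this concavity over the range $t_u < K/G - \alpha - 1$. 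Since $\sum_i R_u(\alpha_i)$ is then a symmetric concave function of $\vec{\alpha}$, Jensen's inequality yields
\begin{equation*}
\sum_{i=1}^{G} R_u(\alpha_i) \leq G\cdot R_u\!\left(\frac{S}{G}\right),
\end{equation*}
with equality at $\alpha_1=\cdots=\alpha_G=S/G$. Substituting this symmetric allocation back into $R^{(2)}$ reduces the peak rate to the one-dimensional maximization over the common value $\alpha=S/G\in[0,K/G]$ used in Eq.~(\ref{eq:R2_peak}); that the optimum lies strictly in the open interval $(0,K/G)$ then follows because at either endpoint one of the two terms vanishes while the other has not yet saturated, and a small perturbation toward the interior strictly increases the sum by strict concavity.

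The chief obstacle is a clean justification of concavity across the entire admissible range of $\alpha$. The finite-difference argument is immediate in the discrete case, but the continuous extension via the Gamma function requires verifying that $\partial^2\binom{K/G-\alpha}{t_u+1}/\partial\alpha^2 \geq 0$, which is delicate near the boundary $\alpha\to K/G$ where the Gamma-interpolated binomial can deviate from its discrete counterpart. The surrounding discussion's qualifier ``when the number of users is sufficiently large'' is most likely designed to guarantee that the effective range of $\alpha$ stays well inside the domain where the concavity argument applies and integer-rounding artifacts can be neglected.
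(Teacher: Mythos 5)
Your proposal is correct and follows essentially the same route as the paper: both arguments rest on observing that the common-file term depends only on $\sum_i\alpha_i$ and that $R_u$ is discretely concave in $\alpha$, established by the identical Pascal's-identity computation of the finite difference. The paper phrases the final step as a variance-reducing exchange argument (moving one request from a larger $\alpha_i$ to a smaller one) rather than invoking Jensen's inequality, but these are interchangeable ways of exploiting the same concavity, and both proofs defer the continuous-relaxation subtlety to the ``large $K$'' assumption.
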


\begin{proof}
Let $\alpha_i$ represent the number of users from class $i$ requesting unique files, and assume that $\mathbf{\alpha}=(\alpha_1,\ldots, \alpha_G)$ maximizes the rate, given by
\begin{equation}
R(\mathbf{\alpha})=R_c\left(\frac{1}{G}\sum^{G}_{i=1}\alpha_i\right) + \sum^{G}_{i=1}R_u(\alpha_i),
\end{equation}
where $R_c$ and $R_u$ have been defined by Eq.~(\ref{eq:R2expand}) and we omit $x$ for simplicity.

Without loss of generality, assume $\alpha_1>\alpha_2$ and let $\mathbf{\beta}=(\alpha_1-1,\alpha_2+1,\alpha_3,\ldots,\alpha_G)$. Then
\begin{align}
R(\mathbf{\beta})&-R(\mathbf{\alpha})=   R_u(\alpha_1-1)-R_u(\alpha_1)+R_u(\alpha_2+1)-R_u(\alpha_2).
\end{align}
We now prove that
\begin{equation}
\label{eq:compare_R_u}
    R_u(\alpha_2+1)-R_u(\alpha_2)\geq R_u(\alpha_1)-R_u(\alpha_1-1).
\end{equation}
This result follows from the fact that the rate is submodular in the number of requests, but we prove it anyway. With $\alpha_1>\alpha_2$, Eq.~(\ref{eq:compare_R_u}) can be written as
\begin{align}
\frac{\binom{\frac{K}{G} - \alpha_2}{t_u+1} - \binom{\frac{K}{G}-\alpha_2-1}{t_u+1}}{\binom{\frac{K}{G}}{t_u}} &\geq \frac{\binom{\frac{K}{G}-\alpha_1+1}{t_u+1} - \binom{\frac{K}{G}-\alpha_1}{t_u+1}}{\binom{\frac{K}{G}}{t_u}}\\
\binom{\frac{K}{G} - (\alpha_2 + 1)}{t_u} &\geq \binom{\frac{K}{G}-\alpha_1}{t_u},
\end{align}
which is true, since binomial coefficients increase monotonically with the number of elements.

Therefore, $\mathbf{\beta}$ achieves a rate at least as high as $\mathbf{\alpha}$ with less variance across the coefficients. For large enough $K$ (\ie using the continuous relaxation of the problem), we can conclude that a uniform set of coefficients would achieve the peak rate.
%
%
%
\end{proof}

We are now ready to compare the peak rate of Scheme 2 with that for the other two.

\begin{theorem}\label{thm:split_unique}
When $M$ is large enough, Scheme 2 offers lower peak rate than Scheme 3. Specifically,
\begin{equation}\label{eq:R2R3threshold}
R_\mathrm{peak}^{(2)}\leq R_\mathrm{peak}^{(3)}\qquad \Leftarrow \qquad M\geq \frac{G}{G-1}\frac{K+1}{K}N_u.
\end{equation}
\end{theorem}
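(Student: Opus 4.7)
The plan is to prove the implication constructively by exhibiting a specific cache partition $x^\star$ for which Scheme~2's unique-file deliveries vanish uniformly in $\alpha$, and then comparing the residual common-file rate with $R^{(3)}_\mathrm{peak}$.

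The choice is $x^\star = 1 - N_u/M$. Since the hypothesis $M \geq \frac{G}{G-1}\frac{K+1}{K}N_u$ forces $M \geq N_u$ for every $G \geq 2$, this gives $x^\star \in [0,1)$. With this partition, the per-user cache devoted to unique files is $M(1-x^\star) = N_u$, which is exactly the size of each class's unique library. Every user can therefore store every unique file from its own class, so $t_u = K/G$ and $R_u(x^\star,\alpha) = 0$ for every $\alpha$. This is the crux of the argument: the unique part of the rate is killed uniformly in $\alpha$, so $R^{(2)}(x^\star,\alpha) = R_c(x^\star,\alpha)$.

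Because $R_c(x,\alpha)$ is non-increasing in $\alpha$ (the binomial $\binom{G\alpha}{t_c+1}$ is non-decreasing in $\alpha$), its maximum is attained at $\alpha = 0$, giving $R_c(x^\star,0) = (K - t_c)/(t_c+1)$ with $t_c = K(M-N_u)/N_c$. The min-max definition of $R^{(2)}_\mathrm{peak}$ then delivers the bound
\begin{equation*}
R^{(2)}_\mathrm{peak} \;\leq\; \frac{K(N_c+N_u-M)}{K(M-N_u)+N_c}.
\end{equation*}
Simplifying Eq.~(\ref{eq:R3}) yields the closed form $R^{(3)}_\mathrm{peak} = GK(N_c+N_u-M)/[KM + G(N_c+N_u)]$. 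If $M \geq N_c+N_u$ both rates are already zero and the claim is trivial; otherwise $K(N_c+N_u-M)$ is strictly positive, and the desired inequality reduces after cross-multiplying the (also positive) denominators to $KM(G-1) \geq GN_u(K+1)$, which is precisely the hypothesis.

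The main obstacle is not the algebra but selecting the right $x^\star$. The natural proportional split $x = N_c/(N_c+N_u)$ only matches $R^{(2)}$ to $R^{(3)}$ at the extreme $\alpha = K/G$ and is exceeded at some interior $\alpha$, leaving the comparison indeterminate. Setting $x^\star = 1 - N_u/M$ instead empties the unique-file rate for every $\alpha$, reducing the problem to a single comparison between two MN-style expressions that coincides, almost by design, with the stated threshold on $M$.
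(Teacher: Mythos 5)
Your proof is correct and takes essentially the same route as the paper's: both choose the partition $x = 1 - N_u/M$ so that every user caches its entire unique library and the unique-file rate vanishes for all $\alpha$, then bound $R^{(2)}_\mathrm{peak}$ by $K(N_c+N_u-M)/\left[K(M-N_u)+N_c\right]$ and compare with the closed form $R^{(3)}_\mathrm{peak}=GK(N_c+N_u-M)/\left[KM+G(N_c+N_u)\right]$. You simply spell out the steps the paper leaves implicit (monotonicity in $\alpha$, positivity of the denominators, and the cross-multiplication reducing to $KM(G-1)\geq GN_u(K+1)$).
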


\begin{proof}
If $x=1-\frac{N_u}{M}$, each user stores all the unique files that it might request. The worst case $\alpha$ is therefore $\alpha=0$. Observe that
\begin{align} \min_{x}\max_\alpha R^{(2)}(x,\alpha) &\leq \max_\alpha R_\mathrm{peak}^{(2)}\left(1-\frac{N_u}{M},\alpha\right)\\
&= R_\mathrm{peak}^{(2)}(1-\frac{N_u}{M},0)\\
&= K\frac{N_u+N_c-M}{K(M-N_u)+N_c}. \label{eq:R2upperbound}
\end{align}
After some rearrangement, Eq.~(\ref{eq:R3}) can be written as
\begin{equation}
R_\mathrm{peak}^{(3)}=KG\frac{N_c+N_u-M}{KM+G(N_c+N_u)}.
\end{equation}
 A simple comparison of the last two equations yields Eq.~(\ref{eq:R2R3threshold}).
\end{proof}

\begin{theorem}\label{thm:common_split}
When $M$ is large enough, Scheme 2 provides lower peak rate than Scheme 1. Specifically,
\begin{equation}
R_\mathrm{peak}^{(2)}\leq R_\mathrm{peak}^{(1)}\qquad \Leftarrow \qquad M\geq \frac{N_c}{G}+N_u. \label{eq:R2R1compare}
\end{equation}
\end{theorem}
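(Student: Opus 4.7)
The plan is to mimic the argument used in the proof of Theorem~\ref{thm:split_unique}: since $R^{(2)}_\mathrm{peak}=\min_x\max_\alpha R^{(2)}(x,\alpha)$, it suffices to exhibit a single value of $x$ for which $\max_\alpha R^{(2)}(x,\alpha)\leq R^{(1)}_\mathrm{peak}$ under the hypothesis $M\geq \frac{N_c}{G}+N_u$. The previous theorem chose $x=1-\frac{N_u}{M}$ to force the unique‐file rate to vanish; that same choice is the natural candidate here too, because the right-hand side of Eq.~(\ref{eq:R2R1compare}) coincides with the value of $M$ at which the cache is exactly large enough to hold $N_c/G$ common files per user plus every unique file in the class.

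First I would set $x_0 = 1-\frac{N_u}{M}$, which is a valid fraction in $[0,1)$ whenever $M\geq N_u$ (ensured by the hypothesis). With this choice, $t_u = \frac{K}{G}\cdot\frac{M(1-x_0)}{N_u}=\frac{K}{G}$, meaning every user in a class caches all $N_u$ unique files it could ever request. Consequently $R_u(x_0,\alpha)=0$ for every $\alpha$, regardless of how many users ask for unique content. The worst case of the remaining term $R_c(x_0,\alpha)$ is then $\alpha=0$ (all $K$ users asking for common files), giving
\begin{equation}
R^{(2)}_\mathrm{peak}\leq R_c(x_0,0)=\frac{K-t_c}{t_c+1}=\frac{K(N_c+N_u-M)}{K(M-N_u)+N_c},
\end{equation}
where $t_c=\frac{K(M-N_u)}{N_c}$, which is $\geq 1$ precisely when $M\geq \frac{N_c}{K}+N_u$, a consequence of the hypothesis since $\frac{N_c}{K}\leq\frac{N_c}{G}$.

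Next I would write $R^{(1)}_\mathrm{peak}=\frac{K(N-M)}{KM+N}$ with $N=N_c+GN_u$ and compare the two expressions directly. Cross-multiplying both positive denominators reduces $R^{(2)}_\mathrm{peak}\leq R^{(1)}_\mathrm{peak}$ to a polynomial inequality that, after expansion and cancellation, collapses to
\begin{equation}
(K+1)\,N_u\,(N-MG)\leq 0,
\end{equation}
which is equivalent to $M\geq \frac{N}{G}=\frac{N_c}{G}+N_u$, as desired.

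The main obstacle is not conceptual but bookkeeping: the cross-multiplied polynomial has several $KM^2$, $NKM$, and $MN_c$ terms that look threatening before they cancel, so I would organize the simplification by grouping the terms containing $N-MG$ and the terms containing $N_c-N=-GN_u$ separately; this makes the factor $(K+1)N_u$ emerge cleanly rather than forcing a brute-force expansion. With that arithmetic handled, the theorem follows from the chosen $x_0$ together with Theorem~\ref{thm:alphapeak}, which justifies treating $\alpha$ symmetrically across classes.
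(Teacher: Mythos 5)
Your proposal is correct and follows essentially the same route as the paper: the paper's proof likewise invokes the bound $R^{(2)}_\mathrm{peak}\leq K\frac{N_c+N_u-M}{K(M-N_u)+N_c}$ obtained from the choice $x=1-\frac{N_u}{M}$ with worst case $\alpha=0$ (its Eq.~(\ref{eq:R2upperbound})), and then compares it with $R^{(1)}_\mathrm{peak}=\frac{K(N-M)}{KM+N}$. The only difference is that you carry out explicitly the cross-multiplication and cancellation that the paper leaves implicit, and your reduction to $(K+1)N_u(N-MG)\leq 0$ checks out.
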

\begin{proof}
From Eq.~(\ref{eq:R2upperbound}) we can observe that
\begin{equation}
    R_\mathrm{peak}^{(2)} \leq K\frac{N_u+N_c-M}{K(M-N_u)+N_c},
\end{equation}
and Eq.~(\ref{eq:R1peak}) can be rearranged as
\begin{equation}
    R_\mathrm{peak}^{(1)} = \frac{K-\frac{KM}{N_c+GN_u}}{\frac{KM}{N_c+GN_u}+1}.
\end{equation}
By comparing these two equations, we are able to generate Eq.~(\ref{eq:R2R1compare}).
\end{proof}

\begin{corollary}
Scheme 2 provides the lowest peak rate of the three when $M\geq\max\left(\frac{G}{G-1}\frac{K+1}{K}N_u,\frac{N_c}{G}+N_u\right)$.
\end{corollary}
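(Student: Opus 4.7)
The plan is to obtain the corollary as an immediate consequence of Theorems~\ref{thm:split_unique} and~\ref{thm:common_split}, without any further calculation. Each of those theorems already establishes a sufficient threshold on $M$ above which Scheme~2 dominates one of the other two schemes in peak rate, so combining the two thresholds will give exactly the stated bound.

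Concretely, I would argue as follows. Suppose $M \geq \max\!\left(\frac{G}{G-1}\frac{K+1}{K}N_u,\ \frac{N_c}{G}+N_u\right)$. Then in particular $M \geq \frac{G}{G-1}\frac{K+1}{K}N_u$, and Theorem~\ref{thm:split_unique} yields $R_\mathrm{peak}^{(2)} \leq R_\mathrm{peak}^{(3)}$. Simultaneously, $M \geq \frac{N_c}{G}+N_u$, so Theorem~\ref{thm:common_split} yields $R_\mathrm{peak}^{(2)} \leq R_\mathrm{peak}^{(1)}$. Combining these two inequalities gives $R_\mathrm{peak}^{(2)} \leq \min\!\left(R_\mathrm{peak}^{(1)}, R_\mathrm{peak}^{(3)}\right)$, which is exactly the claim that Scheme~2 has the lowest peak rate of the three.

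There is essentially no technical obstacle to overcome, since the two prior theorems do all the real work. If anything, the only subtlety worth remarking on is that Theorems~\ref{thm:split_unique} and~\ref{thm:common_split} both rely on the feasibility of the choice $x = 1 - \frac{N_u}{M}$ in Scheme~2, which in turn requires $M \geq N_u$; this is automatically satisfied under the hypothesis of the corollary because $\frac{N_c}{G}+N_u \geq N_u$. Once that mild consistency check is noted, the proof reduces to a single sentence invoking the two theorems in sequence.
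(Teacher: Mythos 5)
Your proposal is correct and follows exactly the same route as the paper: the corollary is obtained by invoking Theorems~\ref{thm:split_unique} and~\ref{thm:common_split} simultaneously under the combined hypothesis on $M$. The additional remark that $M\geq N_u$ guarantees feasibility of the choice $x=1-\frac{N_u}{M}$ is a sensible (if minor) consistency check that the paper leaves implicit.
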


\begin{proof}
This corollary can be simply proven by combining Theorem~\ref{thm:split_unique} and Theorem~\ref{thm:common_split}, setting $M$ to be the larger value between the two.
%
\end{proof}



\begin{theorem}
When $M$ is small enough, Scheme 1 offers lower peak rate than Scheme 2. Specifically,
\begin{equation}\label{eq:R1R2threshold}
R_\mathrm{peak}^{(1)}\leq R_\mathrm{peak}^{(2)}\qquad \Leftarrow \qquad M\leq \frac{\min(N_c,GN_u)}{K}.
\end{equation}
\end{theorem}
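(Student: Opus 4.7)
My plan starts by observing that the hypothesis $M\leq\min(N_c,GN_u)/K$ places every relevant ``$t$'' at most one: $t_1=KM/N\leq 1$ in Scheme~1, and $t_c=KMx/N_c\leq 1$, $t_u=KM(1-x)/(GN_u)\leq 1$ for every $x\in[0,1]$ in Scheme~2. The continuous interpolation in Eq.~(\ref{eq:rate_t<1}) therefore governs all peak rates simultaneously. Using Theorem~\ref{thm:alphapeak} to restrict attention to demand patterns with the same $\alpha$ in every class, I would apply that equation to the worst-case demand vectors (all $K$ requests distinct in Scheme~1; and, for Scheme~2, $\alpha$ distinct unique requests per class together with $K-G\alpha$ distinct common requests) and simplify the $t=1$ rate via the identity $[\binom{K}{2}-\binom{K-m}{2}]/K = m(2K-m-1)/(2K)$. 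This produces the closed forms $R^{(1)}_\mathrm{peak}=K-MK(K+1)/(2N)$ and $R^{(2)}(x,\alpha)=K-Mx(K-G\alpha)(K-G\alpha+1)/(2N_c)-GM(1-x)\alpha(\alpha+1)/(2N_u)$.

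Second, I would observe that $R^{(2)}(x,\cdot)$ is strictly concave in $\alpha$, since $\partial^2 R^{(2)}/\partial\alpha^2 = -MxG^2/N_c - GM(1-x)/N_u < 0$, so in the continuous relaxation the inner maximum over $\alpha$ is attained at the unique critical point. Solving $\partial R^{(2)}/\partial\alpha=0$ to leading order in $K$ gives $\alpha^\ast(x)=xKN_u/[(1-x)N_c+xGN_u]$, which one checks lies in $[0,K/G]$ for every $x\in[0,1]$ (so the interior maximizer is always admissible). Substituting $\alpha^\ast$ back and exploiting the identity $K-G\alpha^\ast = K(1-x)N_c/D(x)$, with $D(x):=(1-x)N_c+xGN_u$, the two quadratic penalty terms telescope into a single fraction, yielding $\max_\alpha R^{(2)}(x,\alpha)=K-MK^2 x(1-x)/(2D(x))$.

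Since $R^{(2)}_\mathrm{peak}=\min_x\max_\alpha R^{(2)}(x,\alpha)$, the required inequality $R^{(1)}_\mathrm{peak}\leq R^{(2)}_\mathrm{peak}$ is equivalent to $Kx(1-x)/D(x)\leq (K+1)/N$ for every $x\in[0,1]$. Bounding $K+1>K$, it suffices to prove the $K$-free statement $Nx(1-x)\leq D(x)$, and rearranging the difference yields $D(x)-Nx(1-x)=(1-x)^2 N_c + x^2 GN_u \geq 0$---immediate. I expect the main obstacle to be cleanly tracking the $O(K)$ corrections introduced by the ``$+1$'' shifts inside $(K-G\alpha)(K-G\alpha+1)$ and $\alpha(\alpha+1)$: they perturb the critical-point location slightly, but the slack of $(K+1)/N$ over $K/N$ in the final inequality is exactly what absorbs them, so no additional asymptotic hypothesis beyond the continuous relaxation already adopted throughout the paper is required.
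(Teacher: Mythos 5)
Your leading-order computation is correct, and it takes a genuinely different route from the paper's: you evaluate the inner maximum over $\alpha$ in closed form and reduce the comparison to the identity $D(x)-Nx(1-x)=N_c(1-x)^2+GN_ux^2\geq 0$, whereas the paper replaces $t_c$ and $t_u$ by multiples of a common parameter $\gamma=KM/\max(N_c,GN_u)$ and optimizes a cruder quadratic. The closed forms, the concavity in $\alpha$, the critical point $\alpha^\ast(x)=xKN_u/D(x)$ and the telescoped value $K-MK^2x(1-x)/(2D(x))$ all check out, and your argument is considerably more transparent than the paper's.

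The gap is your final claim: the slack of $(K+1)/K$ does \emph{not} absorb the ``$+1$'' corrections, because they are of relative size $G/K$, not $1/K$. Carrying them through at $\alpha^\ast(x)$, the linear parts of the two quadratics contribute $\frac{MKx(1-x)}{2D(x)}$ and $\frac{GMKx(1-x)}{2D(x)}$ respectively, so the quantity subtracted from $K$ is $\frac{MKx(1-x)(K+G+1)}{2D(x)}$ and the inequality you must establish becomes
\begin{equation*}
(K+G+1)\,Nx(1-x)\;\leq\;(K+1)\,D(x)\qquad\Longleftrightarrow\qquad GNx(1-x)\;\leq\;(K+1)\bigl[N_c(1-x)^2+GN_ux^2\bigr].
\end{equation*}
Viewed as a quadratic in $u=x/(1-x)$, this holds for every $x\in[0,1]$ if and only if $G(N_c+GN_u)^2\leq 4(K+1)^2N_cN_u$, which fails when $G$ is a sizable fraction of $K$. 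Concretely, take one user per class with $G=K=9$, $N_c=N_u=100$, $M\leq\min(N_c,GN_u)/K=100/9$: at $x=1/3$ your own formulas give $\max_\alpha R^{(2)}(1/3,\alpha)=9-0.0515M$, while $R^{(1)}_\mathrm{peak}=9-0.045M$, so $\min_x\max_\alpha R^{(2)}(x,\alpha)<R^{(1)}_\mathrm{peak}$ and no refinement of the $\alpha$-optimization can rescue the argument at that $x$ --- the inequality being proven fails there. The proof can only be completed under an additional hypothesis such as $G(N_c+GN_u)^2\leq 4(K+1)^2N_cN_u$ (morally, $G\ll K$), under which the AM--GM bound $N_c(1-x)^2+GN_ux^2\geq 2\sqrt{GN_cN_u}\,x(1-x)$ finishes the job. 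For what it is worth, the paper's own proof stumbles in the same regime: it substitutes $\gamma x\leq t_c$ and $\gamma(1-x)\leq t_u$ into an expression that is decreasing in $t_c$ and $t_u$, which yields an upper bound on $R^{(2)}(x,\alpha)$ rather than the lower bound it asserts, so your difficulty reflects a genuine obstruction in the statement rather than a flaw peculiar to your approach.
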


\begin{proof}
If $M\leq \frac{\min(N_c,GN_u)}{K}$ then $t_1\leq 1$ and we can combine Eqs.~(\ref{eq:R1peak})~and~(\ref{eq:rate_t<1}) to obtain
\begin{align}
     R_\mathrm{peak}^{(1)} &=  K-\frac{M}{2}\frac{K(K+1)}{N_c+GN_u}\\
     &\leq K-\gamma\frac{K+1}{4},\label{eq:R1peakgamma}
\end{align}
where $\gamma=\frac{KM}{\max(N_c,GN_u)}$.

As for $R_\mathrm{peak}^{(2)}$, it is defined as the highest rate experienced for any number of unique requests $\alpha$:
\begin{align}
    R_\mathrm{peak}^{(2)} &= \min_{x}\max_{\alpha} R^{(2)}\left(x,\alpha\right).
\end{align}
If $M\leq \frac{\min(N_c,GN_u)}{K}$ then $t_c\leq x$ and $t_u\leq 1-x$. Combining Eqs.~(\ref{eq:R2expand})~and~(\ref{eq:rate_t<1}) then yields
\begin{align}
    R^{(2)}\left(x,\alpha\right) &=\frac{\binom{K}{2}-\binom{G\alpha}{2}}{\binom{K}{1}} t_c+(K-G\alpha)(1-t_c)
     + G\frac{\binom{\frac{K}{G}}{2}-\binom{\frac{K}{G}-\alpha}{2}}{\binom{\frac{K}{G}}{1}} t_u+G\alpha(1-t_u),
\end{align}
where $t_c=\frac{KMx}{N_c}$ and $t_u=\frac{KM(1-x)}{GN_u}$. It can be shown that $R^{(2)}\left(x,\alpha\right)$ is monotonically increasing with $N_c$ and $N_u$ or, equivalently, monotonically decreasing with $t_c$ and $t_u$. Replacing $t_c$ and $t_u$ with $\gamma x$ and $\gamma(1-x)$, respectively, should therefore reduce the value of $R^{(2)}(x,\alpha)$:
\begin{align}
    R^{(2)}\left(x,\alpha\right) &\geq K-\gamma\left(\frac{K + 1}{2}x\right.
    \left.+\frac{G\alpha[G\alpha + G - (2K + G+ 1)x]}{2K}\right).
\end{align}
This is a quadratic equation with respect to $\alpha$, which attains its maximum value at $\alpha^\star=\frac{Gx + 2Kx - G + x}{2G}$. Similarly, $R^{(2)}\left(x,\alpha^\star\right)$ is a quadratic equation of $x$ which can be minimized to obtain
\begin{align}
    R^{(2)}_\mathrm{peak}&=\min_x\max_\alpha R^{(2)}(x,\alpha)\nonumber\\
    &\geq \min_xR^{(2)}(x,\alpha^\star)\\
    &\geq K-\gamma\frac{K+1}{2}\left(\frac{(G + K+1)(G+K)}{(G + 2K + 1)^2}\right)\label{eq:GNugeqNc1}\\
    &\geq K-\gamma\frac{K+1}{2}\left(\frac{1}{2}-\frac{K^2+(K+1)^2-G^2}{2(G + 2K + 1)^2}\right)\\
    &\geq R^{(1)}_\mathrm{peak},
\end{align}
where the last inequality results from comparing Eq.~(\ref{eq:GNugeqNc1}) with Eq.~(\ref{eq:R1peakgamma}) while keeping in mind that $G\leq K$.
\end{proof}

\begin{corollary}\label{cor:Mcompare}
For small enough $M$, Scheme 1 yields lower peak rate than Schemes 2 and 3. For large enough $M$, Scheme 2 offers the lowest rate of the three. In some cases, there is a range of intermediate $M$ values for which Scheme 3 has lower rate than the other two.
\end{corollary}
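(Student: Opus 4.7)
The plan is to obtain the corollary by stitching together the four pairwise peak-rate comparisons that precede it: the unnamed theorem immediately above Corollary~\ref{cor:Mcompare} (giving $R_\mathrm{peak}^{(1)}\leq R_\mathrm{peak}^{(2)}$ for $M\leq\min(N_c,GN_u)/K$), Theorem~\ref{thm:common_unique} (an equivalence fixing the Scheme~1 vs.\ Scheme~3 crossover at $M=N_c-GN_u/K$), Theorem~\ref{thm:split_unique}, and Theorem~\ref{thm:common_split}. Since these results already compare the schemes in pairs, the remaining work is to intersect the appropriate $M$-ranges and to verify non-emptiness for each of the three regimes mentioned in the corollary.

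For the small-$M$ claim I would combine the unnamed preceding theorem with Theorem~\ref{thm:common_unique} to obtain the sufficient threshold
\begin{equation*}
M\leq\min\!\left(\tfrac{\min(N_c,GN_u)}{K},\ N_c-\tfrac{GN_u}{K}\right),
\end{equation*}
which is non-empty in the practical regime $KN_c>GN_u$ that the paper assumes throughout. For the large-$M$ claim, I would simply invoke the corollary that already sits between Theorem~\ref{thm:common_split} and Corollary~\ref{cor:Mcompare}, which gives Scheme~2 optimal once $M\geq\max\!\left(\tfrac{G}{G-1}\tfrac{K+1}{K}N_u,\ \tfrac{N_c}{G}+N_u\right)$. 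Neither of these two parts requires any new inequality, only the bookkeeping of intersecting two sufficient conditions.

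The intermediate regime is the part that cannot be read off mechanically from earlier statements, because Theorems~\ref{thm:split_unique} and~\ref{thm:common_split} give only sufficient conditions, not exact crossovers, for Scheme~2 versus Scheme~3. Here I would fix $M^\star=N_c-GN_u/K$ from Theorem~\ref{thm:common_unique}, just above which $R_\mathrm{peak}^{(3)}<R_\mathrm{peak}^{(1)}$, and then compare the closed form
\begin{equation*}
R_\mathrm{peak}^{(3)}=KG\,\frac{N_c+N_u-M}{KM+G(N_c+N_u)}
\end{equation*}
from the proof of Theorem~\ref{thm:split_unique} with the upper bound on $R_\mathrm{peak}^{(2)}$ used in the proof of Theorem~\ref{thm:common_split}. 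My approach is to exhibit a parameter family (for instance $N_u$ moderately larger than $N_c/G$, so that $M^\star$ falls strictly below $\tfrac{G}{G-1}\tfrac{K+1}{K}N_u$) for which a small neighborhood of $M^\star$ satisfies $R_\mathrm{peak}^{(3)}<R_\mathrm{peak}^{(2)}$ as well, producing a non-empty interval on which Scheme~3 strictly dominates the other two.

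The main obstacle is precisely this last step: in the absence of an exact Scheme~2 vs.\ Scheme~3 crossover, the existence of the intermediate window must be argued either by a concrete parametric example, or, more cleanly, by expanding the difference $R_\mathrm{peak}^{(2)}-R_\mathrm{peak}^{(3)}$ as a function of $M$ at $M=M^\star$ and checking its sign on an explicit parameter set. This local comparison, together with the continuity of both rate curves and the fact that the two outer regimes already cover the extremes, justifies the qualifier \emph{in some cases} in the statement and is consistent with the numerical evidence shown later in Section~\ref{s-simulations}.
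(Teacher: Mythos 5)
The paper gives no proof of this corollary at all: the first two claims are left as immediate consequences of the preceding pairwise comparisons (your intersection $M\leq\min\bigl(N_c-\tfrac{GN_u}{K},\tfrac{N_c}{K},\tfrac{GN_u}{K}\bigr)$ is exactly the first row of Table~I, and the large-$M$ claim is literally the earlier corollary), while the third claim is supported only by the numerical evidence in Section~\ref{s-simulations} (Fig.~\ref{fig:cur_set_bound}, $G=2$). So for the first two parts your bookkeeping reproduces the paper's implicit argument. One small correction: the paper does not assume $KN_c>GN_u$; it assumes $K$ is smaller than $N_c$ and $N_u$, which does not imply $N_c>GN_u/K$ in general (e.g.\ $G=K$ with $N_u\gg N_c$). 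Non-emptiness of the small-$M$ window is therefore itself a ``some cases'' statement, consistent with the table but not guaranteed by the paper's standing assumptions.

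For the intermediate regime your plan has a directional flaw as first stated: comparing the closed form of $R_\mathrm{peak}^{(3)}$ against the \emph{upper} bound $K\frac{N_u+N_c-M}{K(M-N_u)+N_c}$ on $R_\mathrm{peak}^{(2)}$ can only ever certify $R_\mathrm{peak}^{(2)}\leq R_\mathrm{peak}^{(3)}$, never the reverse. To establish $R_\mathrm{peak}^{(3)}<R_\mathrm{peak}^{(2)}$ you need a \emph{lower} bound on $R_\mathrm{peak}^{(2)}=\min_x\max_\alpha R^{(2)}(x,\alpha)$, i.e.\ for every cache split $x$ an adversarial choice of $\alpha$ forcing the rate up; none of the stated theorems provides this. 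Your fallback --- evaluating the true difference $R_\mathrm{peak}^{(2)}-R_\mathrm{peak}^{(3)}$ on an explicit parameter set near $M^\star=N_c-GN_u/K$ --- is the right fix and is, in effect, what the paper does numerically, but as written your proposal does not carry it out, so the existence of the intermediate window remains asserted rather than proved. Since the corollary only claims ``in some cases,'' a single fully worked numerical instance (with the inner $\max_\alpha$ actually evaluated) would close this gap and would already be more than the paper itself supplies.
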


\begin{table}[]
    \centering
    \begin{tabular}{|c|c|}
    \hline
     $M$ range & Result \\
     \hline
       $M\leq \min \left(N_c-\frac{GN_u}{K}, \frac{N_c}{K}, \frac{GN_u}{K} \right)$  & $R_\mathrm{peak}^{(1)}$ best\\
      $M\geq\max\left(\frac{G}{G-1}\frac{K+1}{K}N_u,\frac{N_c}{G}+N_u\right)$   & $R_\mathrm{peak}^{(2)}$ best\\
       $\frac{N}{K} \leq M \leq \frac{N}{2G}$  & $R_\mathrm{peak}^{(1)}/R^*(M)\leq 8$\\
       $\frac{G}{K}(N_c+N_u) \leq M \leq \frac{N}{2G}$  & $R_\mathrm{peak}^{(2)}/R^*(M) < 8+8\frac{K}{G}$\\
       $\frac{G}{K}(N_c+N_u) \leq M \leq \frac{N}{2G}$  & $R_\mathrm{peak}^{(3)}/R^*(M)\leq 8\frac{K}{G}$\\
         \hline
    \end{tabular}
    \caption{Summary of peak rate results}
    \label{tab:my_label}
\end{table}

\section{Numerical Simulations}
\label{s-simulations}

This section provides simulations illustrating the peak and uniform-average rates of the three proposed schemes, as well as lower bounds to provide a framework for comparison. To the extent of our knowledge, there are no schemes in the literature which could be suitable for our scenario with heterogeneous user profiles. The best performing schemes for homogeneous users are those found by solving combinatorial optimization problems as described in in~\cite{jin2017structural}~and~\cite{daniel2019optimization}. However, with uniform file popularities, cache capacities and file sizes, those schemes are equivalent to Maddah-Ali and Niesen's scheme. Solving the optimization problems while ignoring user classes results in Scheme~1 and doing it independently for each class results in Scheme~3.



Fig.~\ref{fig:cur_set_bound} illustrates the peak rate of the three schemes and the cut set bound, for different cache sizes $M$ and number of classes $G$. The number of users per class is set as $K/G=8$ in all cases. The results match the statement in Corollary~\ref{cor:Mcompare}: it is better to use Scheme 1 (all common) for small cache sizes and Scheme 2 (split) for large ones, regardless of the number of classes. This result seems counter-intuitive, since it suggests that every user should cache segments from every file when the caches are small, even as the number of unique files scales with the number of classes. However, it turns out that the multicasting gain more than compensates for the loss in local caching.

The peak rate values increase with the number of classes due mainly to the increase in the number of files and users. It can be seen that the peak rate of Scheme~3 (all unique) increases above the others, reaching a point when it is never the preferred option. Again, this is somewhat counter-intuitive; it seems like a good idea to deal with each class independently when the number of classes is large, but it is not.

\begin{figure}
\begin{center}
\centerline{\includegraphics[width=.4\textwidth]{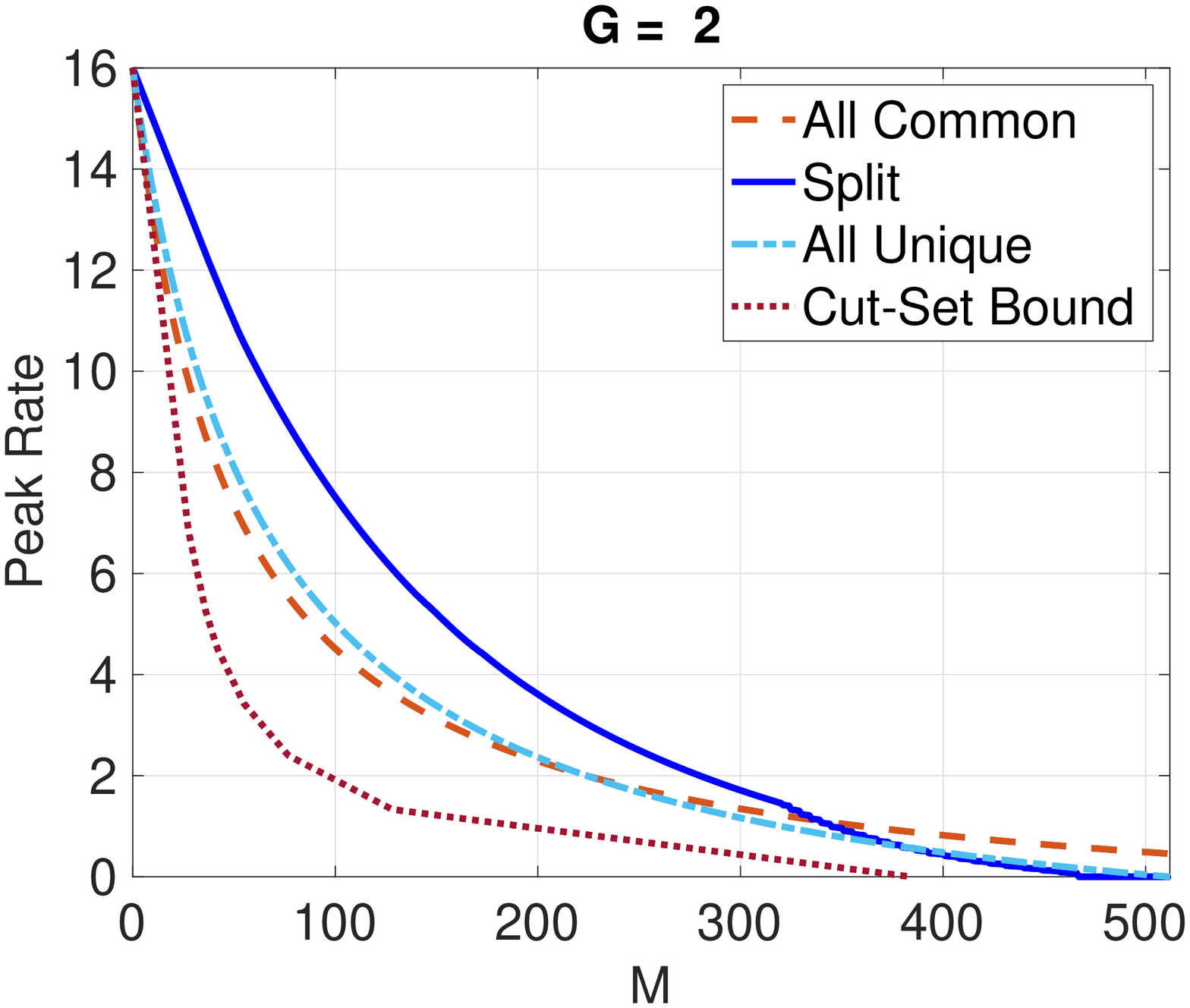}}
\centerline{\includegraphics[width=.4\textwidth]{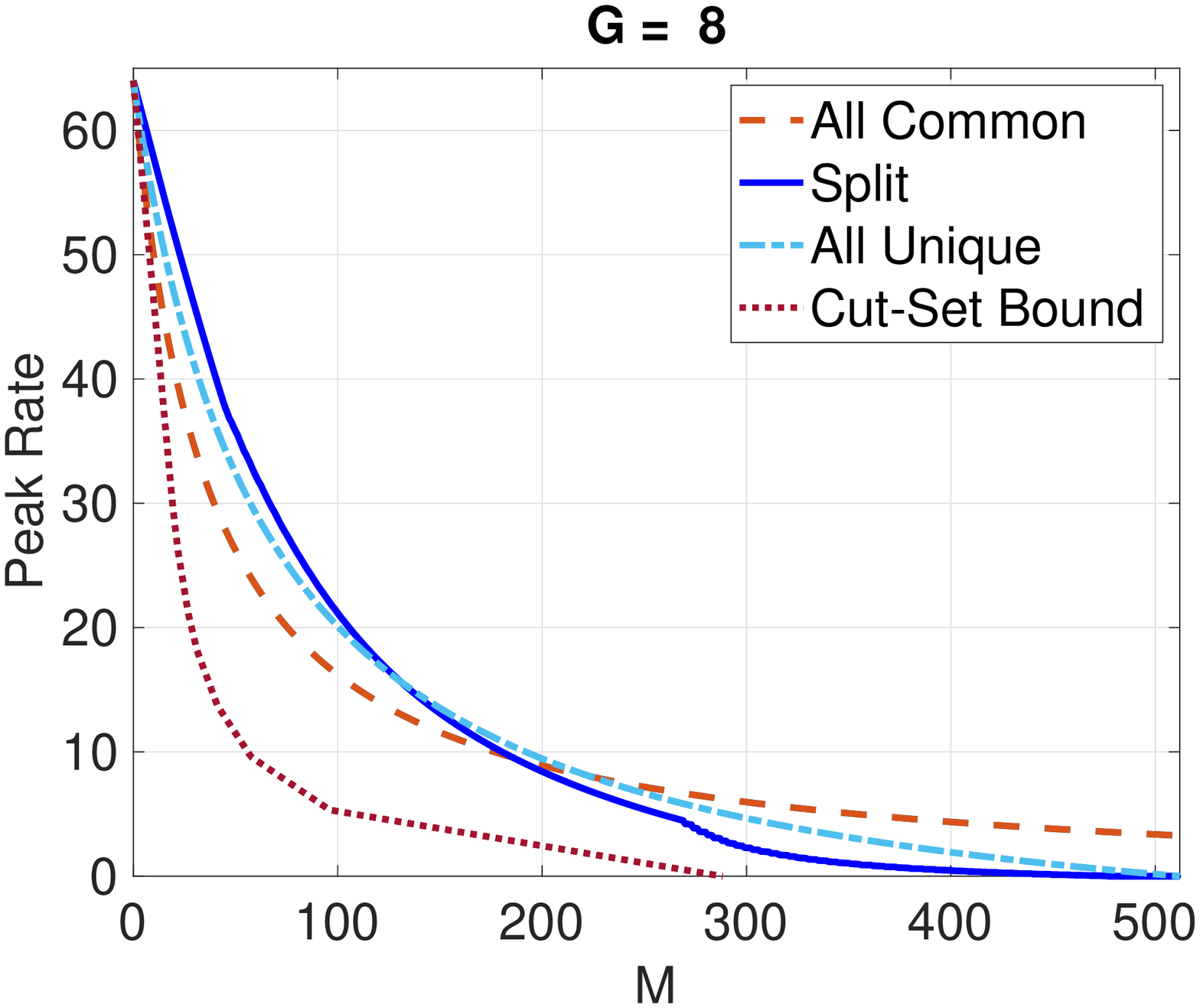}}
\centerline{\includegraphics[width=.4\textwidth]{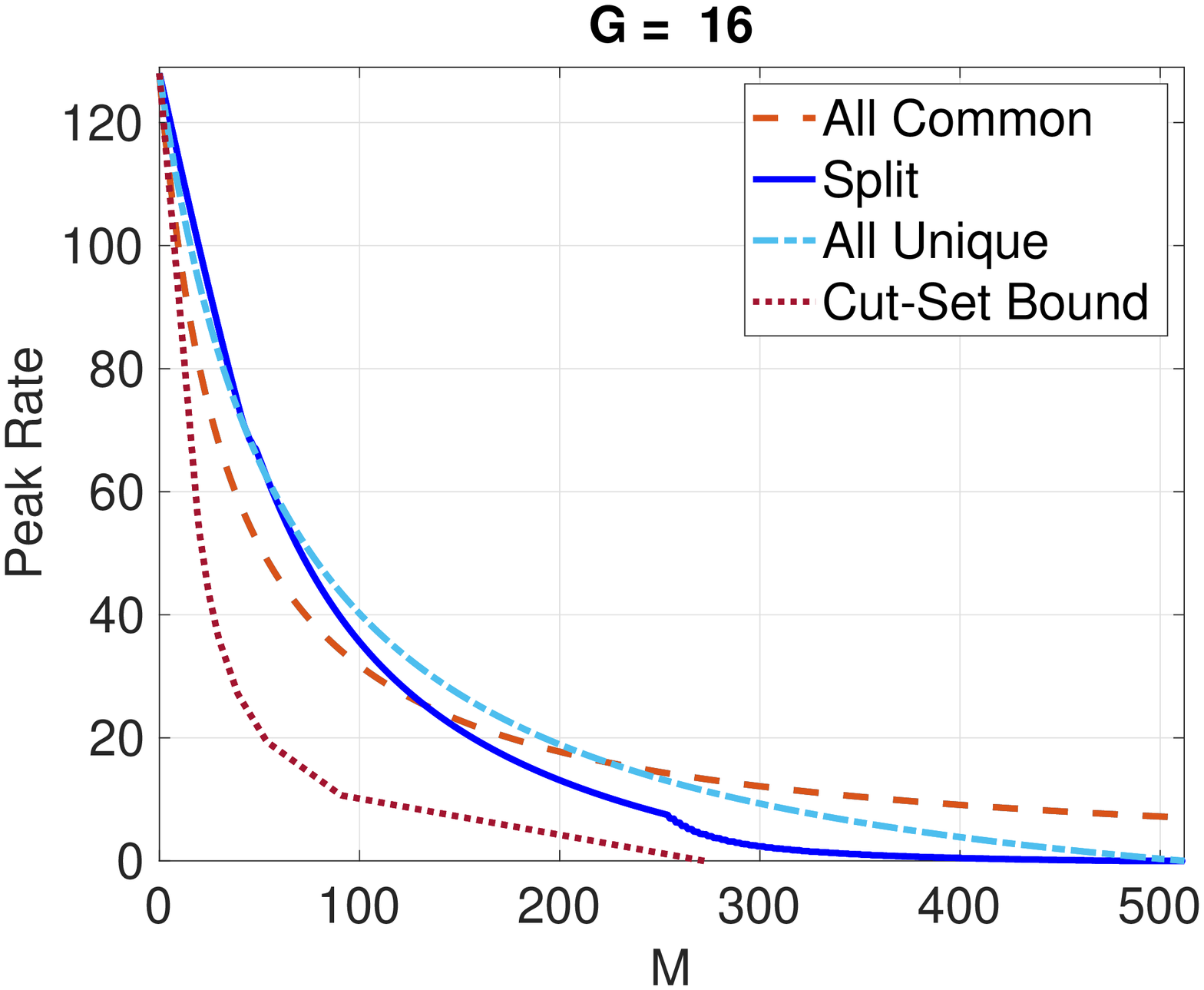}}
\end{center}
\caption{Peak rates and cut-set bound vs cache size ($M$) for $N_c=256$, $N_u=256$, and 8 users per class.}
\label{fig:cur_set_bound}
\end{figure}

\begin{figure}
\begin{center}
\centerline{\includegraphics[width=.4\textwidth]{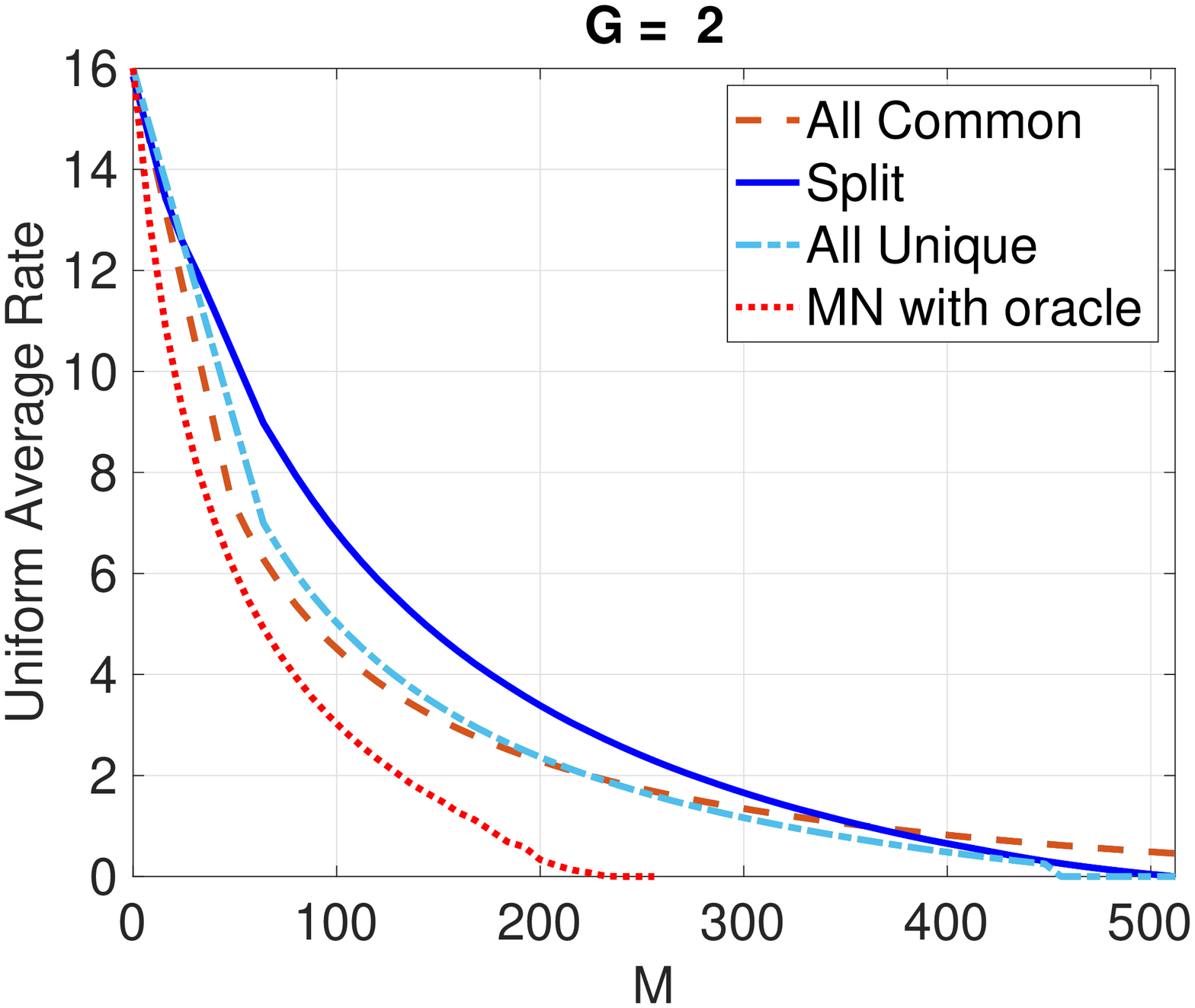}}
\centerline{\includegraphics[width=.4\textwidth]{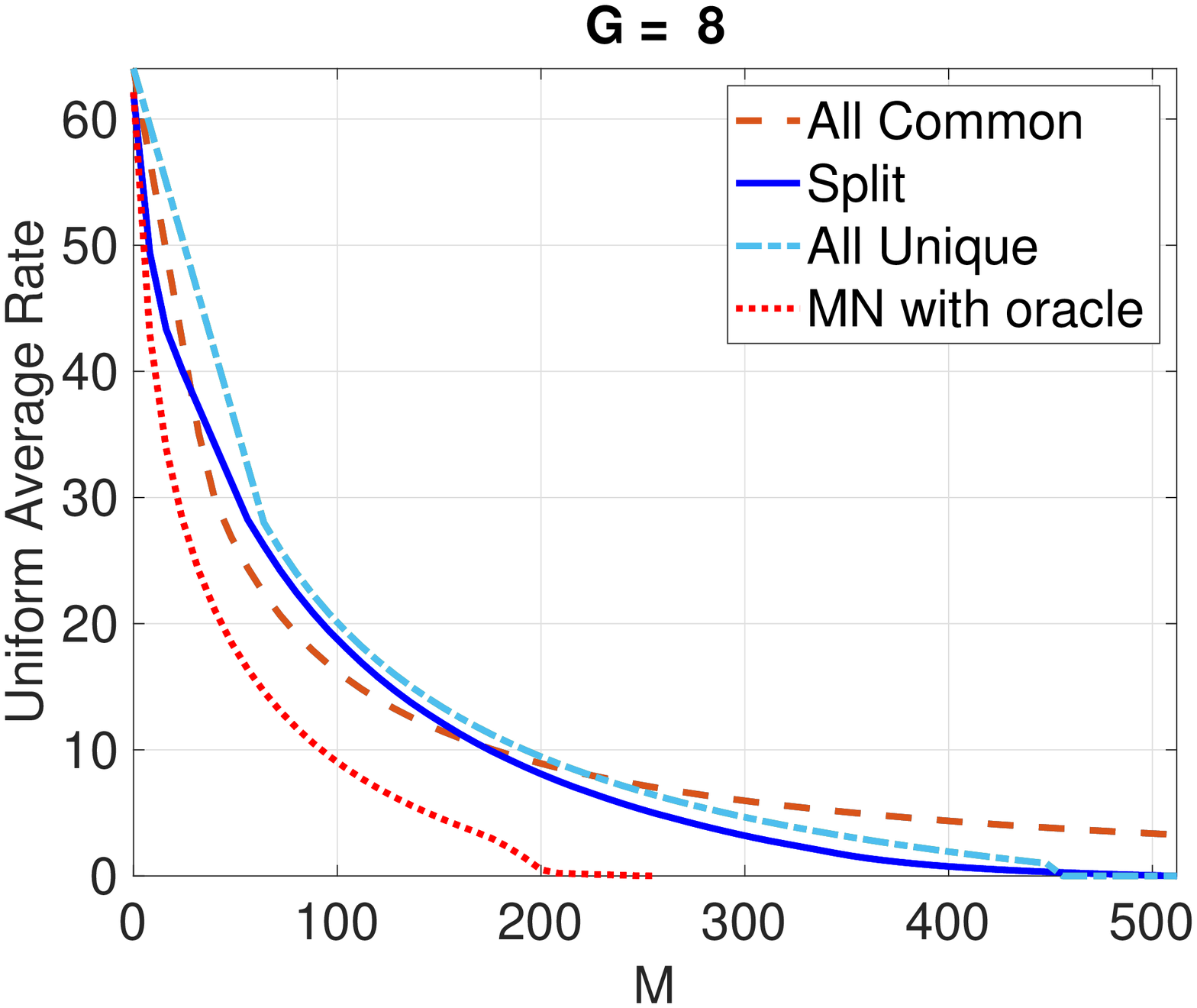}}
\centerline{\includegraphics[width=.4\textwidth]{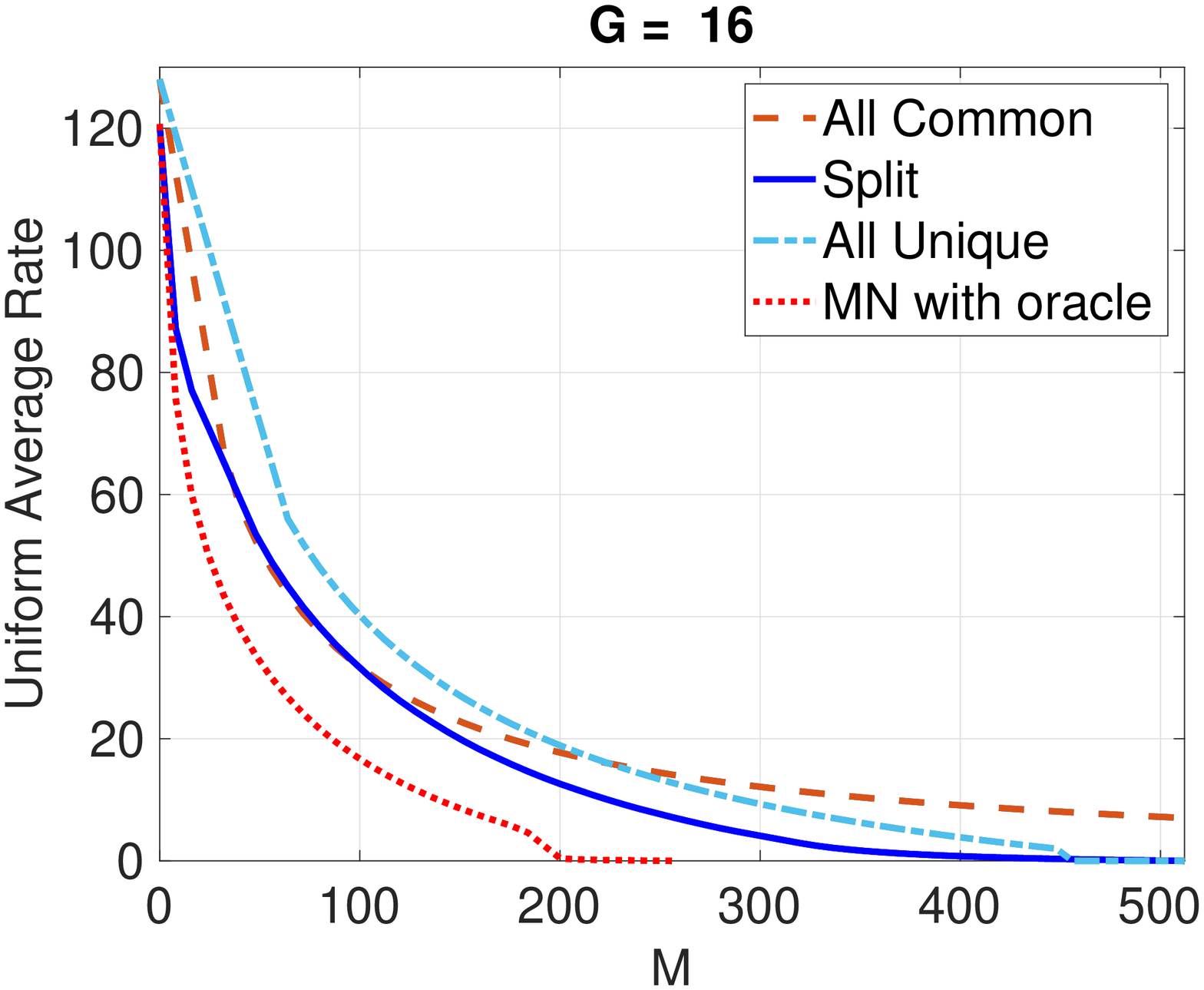}}
\end{center}
\caption{Average rate vs Cache size ($M$) for $N_c=256$, $N_u=256$, and 8 users per class.}
\label{fig:ave_oracle}
\end{figure}

Fig.~\ref{fig:ave_oracle} shows the uniform-average rates of the three schemes in the same scenario. The minimal uniform-average rate in scenarios with heterogeneous user profiles is unknown, so we define a new scheme "MN with oracle" to provide an approximate lower bound. In this scheme the system knows in advance which users will request common and unique files, and it populates their caches using MN's scheme for common and unique files separately. This results in the following uniform-average rate:
\begin{align}
    \label{eq:R_oracle}
     R_\mathrm{orc} = \sum_{k_c=0}^{K} p_{k_c} \Bigg( &E\left[R(k_c,m,t_{oc})\right]
     +G\cdot E\left[R\left(\frac{K-k_c}{G},m,t_{ou}\right)\right]\Bigg),\nonumber
\end{align}
where $k_c$ represents the number of users that request common files, $E[R(K,m,t)]$ is the expectation of the rate defined in Eq.~(\ref{eq:Rmreq}) over the number of distinct files requested $m$, $t_{oc} = \frac{k_cM}{N_c}$, $t_{ou} = \frac{(K-k_c)M}{GN_u}$, 
and
\begin{equation}
    \label{eq:p_binomial}
     p_{k_c} = \dbinom{K}{k_c}\Big(\frac{N_c}{N_c+N_u}\Big)^{k_c} \Big(\frac{N_u}{N_c+N_u}\Big)^{K-k_c}
\end{equation}
is the probability that $k_c$ users request common files.

The results suggest that Scheme 2, which splits the placement and delivery of common and unique files, is highly suboptimal when the number of classes is small, unless the cache memory is very small or very large. However, when the number of classes increases, Scheme 2 achieves the lowest average rate among all three of the schemes. 
This result aligns with Prop.~3 in \cite{zhang202averagerate}.

It is worth noting that these results are different from those previously observed for the peak rate: for small $M$, Scheme 2 presents the lowest uniform-average rate and the highest peak rate among the three schemes, regardless of the number of classes.


Fig.~\ref{fig:avg_vs_class} investigates the performance of the three schemes as the number of classes grows. In this scenario, $N_c = N_u =256$, there are $8$ users in each class and we set $M =256$ to provide enough storage for each user to cache half of the files it could request. The top plot stands for the comparison of the peak rates of three schemes, and the bottom plot compares the average rates of three schemes. As the number of classes increases, so does the number of total users. This results in the degeneration of the performance of three schemes. 
Schemes 1 (all common) and 3 (all unique) suffer nearly linear degradation, while Scheme 2 (split) scales better. This is because Scheme 2 (split) is able to adjust its cache distribution as the number of classes increases so that both peak and average rate are minimized. Therefore, when there are more classes of users joining the computer communication network, Scheme 2 (split) is able to provide better video streaming service than other schemes and its performance is scarcely effected. Similar results for $N_c=64$, $N_u =64$ and $M=64$ can be found in papers~\cite{wang2019coded}~and~\cite{zhang202averagerate}.

\begin{figure}
\begin{center}
\centerline{\includegraphics[width=.4\textwidth]{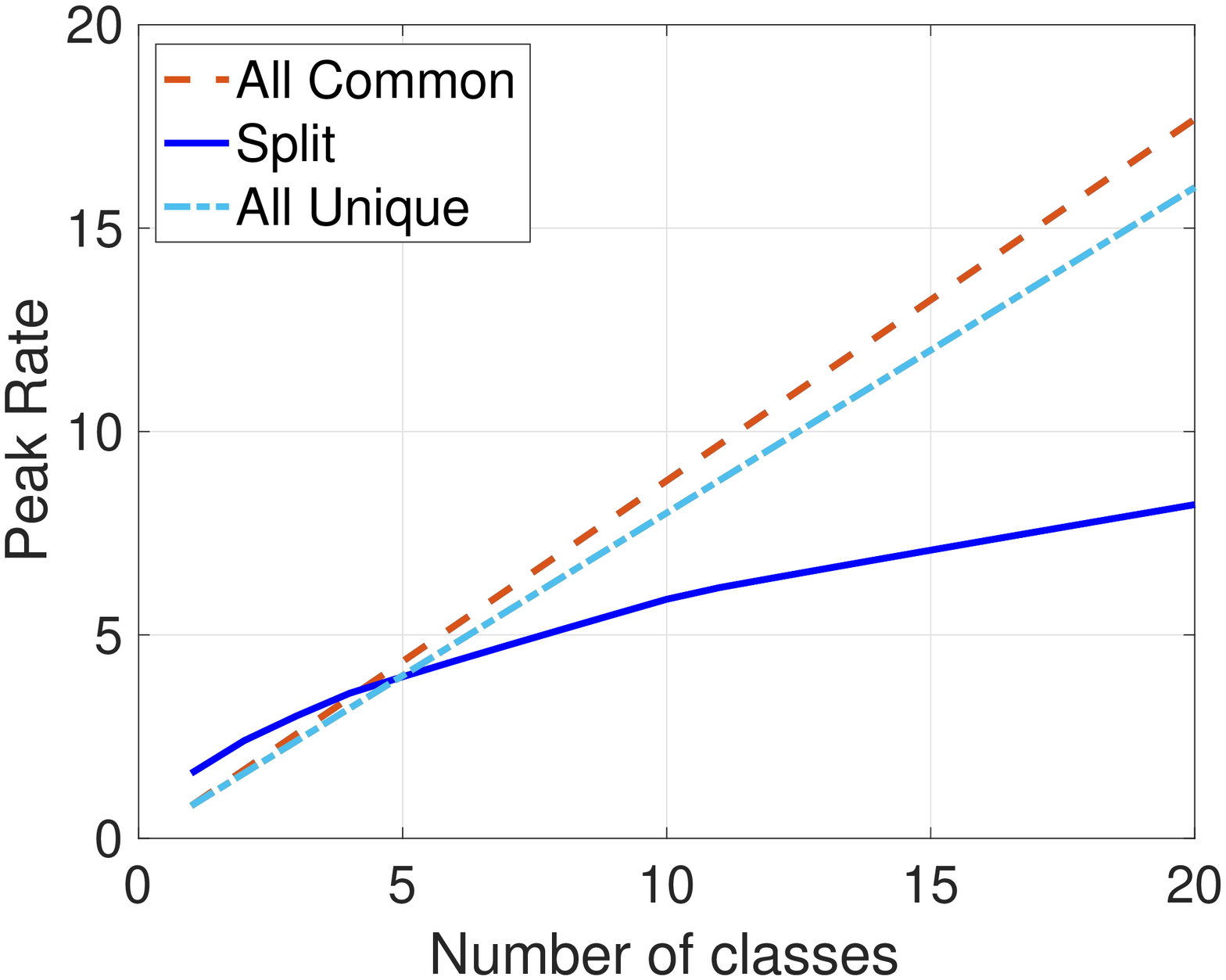}}
\centerline{\includegraphics[width=.4\textwidth]{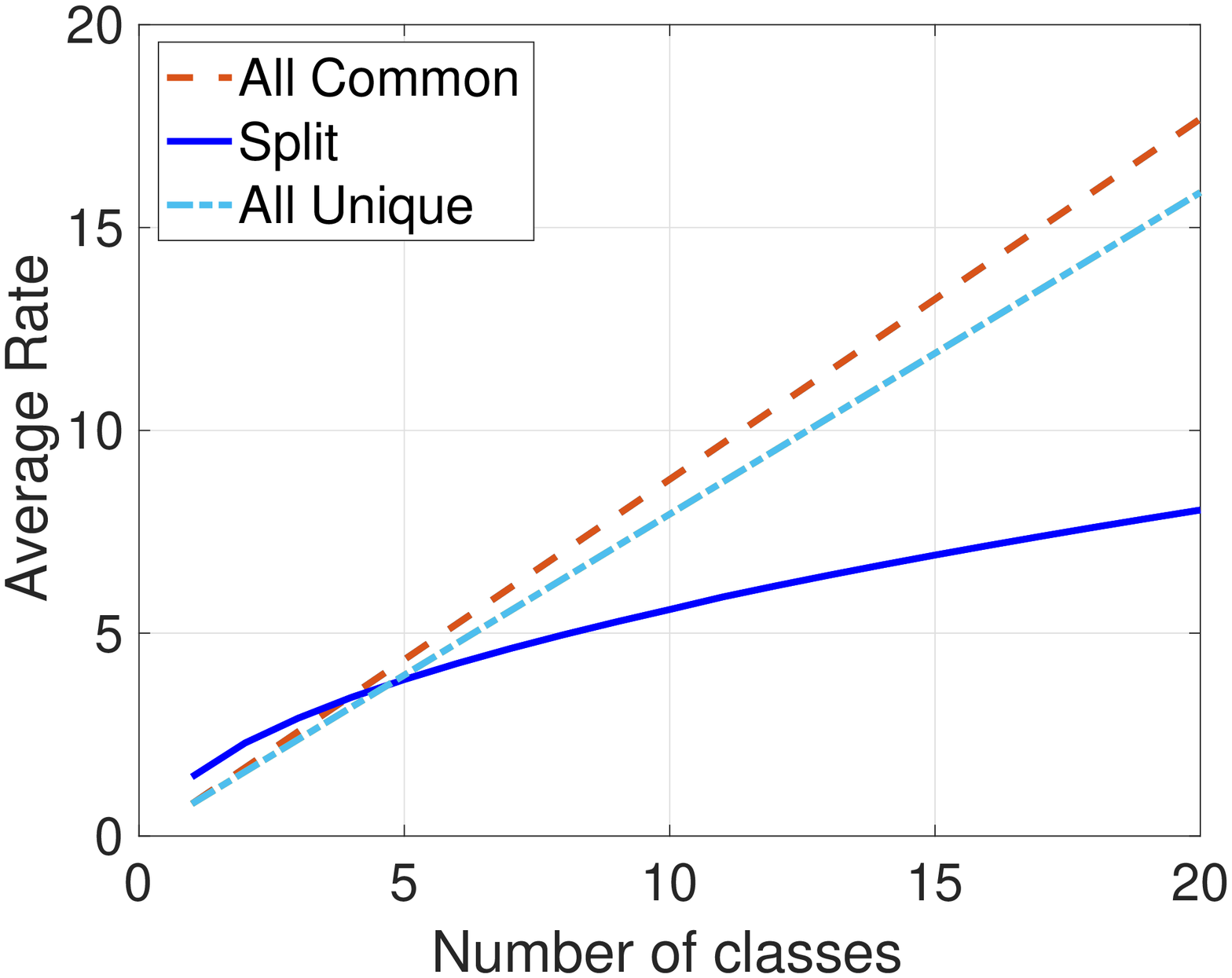}}
\end{center}
\caption{Top: Peak rate vs number of classes for $N_c=256$, $N_u=256$, $M=256$ and 8 users per class. \\
Bottom:  Average rate vs number of classes for $N_c=256$, $N_u=256$, $M=256$ and 8 users per class.}
\label{fig:avg_vs_class}
\end{figure}

\section{Conclusion}
\label{s-conclusion}
This paper proposes three coded caching schemes with uncoded pre-fetching which are suitable for a system where end users are categorized into classes according to their demand distributions. It is assumed that the files are either common, which means that they can be requested by any user in any class, or unique, meaning that only users in a specific class are likely to request them. The first scheme treats all files as if they were common, the second one decouples the delivery of common and unique files, and the third treats all files as if they were unique.

The peak and uniform-average rates of the three schemes are derived and compared with each other, showing that there exist conditions under which each scheme outperforms the other two. Specifically, Scheme~1 provides the lowest peak rate when the caches are small and Scheme~2 when the caches are large. Scheme~3 is best for intermediate cache sizes when the number of classes is small and the number of users is large. Their peak rates are also compared with a cut-set lower bound on the achievable rate to obtain bounds on the gap to optimality for each scheme.

In future work, we plan to study the uniform-average rate of the three schemes in more detail, seeking lower bounds on the minimal rates achievable and attempting to characterize the capacity region of coded caching with heterogeneous user profiles.



\bibliography{reference}

\appendix

\subsection{Proof of Theorem~\ref{pr:schemeII_peak}}
\label{SecondthAppendix}

\begin{proof}

First observe that, according to the continuous relaxation of binomial coefficients in Eq.~(\ref{eq:fact_gamma_relaxation}),
\begin{equation}
    \frac{\partial}{\partial b}\binom{a}{b}=\binom{a}{b}\cdot[\psi(a-b+1)-\psi(b+1)],
\end{equation}
where $\psi(x)$ denotes the digamma function. This formula will significantly simplify the calculations throughout the proof.

The partial derivative of $R_\mathrm{peak}^{(2)}$ with respect to $x$ is,
\begin{equation}\label{eq:partial_schemeII_peak_general}
    \frac{\partial R_\mathrm{peak}^{(2)}}{\partial x}  = \frac{\partial R_c(x,\alpha)}{\partial x}+ \frac{\partial (GR_u(x,\alpha))}{\partial x},
\end{equation}
where
\begin{IEEEeqnarray}{rCl}
\frac{\partial R_c(x,\alpha)}{\partial x}&=  &\frac{1}{\binom{K}{t_c}}\frac{KM}{N_c} \Bigg\{ \Big[  \psi(t_c+1)- \psi(t_c+2) \Big]\cdot\left[ \binom{K}{t_c+1}-\binom{G\alpha }{t_c+1}\right]\nonumber  \\  &&+ \binom{K}{t_c+1}\cdot \Big[ \psi( K-t_c ) - \psi( K-t_c+1 ) \Big] + \binom{G\alpha }{t_c+1} \cdot\>\Big[ \psi( K-t_c+1 )- \psi( G\alpha -t_c) \Big] \Bigg\}\nonumber
\end{IEEEeqnarray}
and
\begin{IEEEeqnarray}{rCl}
\frac{\partial (GR_u(x,\alpha))}{\partial x} & = & \frac{1}{\binom{\frac{K}{G}}{t_u}}\frac{KM}{N_u} \Bigg\{ \Big[ \psi(t_u+2)\nonumber- \psi(t_u+1) \Big]  \cdot\>\left[ \binom{\frac{K}{G}}{t_u+1} -\binom{\frac{K}{G}-\alpha }{t_u+1}\right]\\ \nonumber &&+\binom{\frac{K}{G}}{t_u+1} \cdot\>\Big[ \psi\Big( \frac{K}{G}-t_u+1 \Big) - \psi\Big(\frac{K}{G}-t_u \Big) \Big] +\> \binom{\frac{K}{G}-\alpha}{t_u+1}\Big[\psi\Big( \frac{K}{G}-t_u- \alpha \Big)-\> \psi\Big(\frac{K}{G}-t_u+1 \Big) \Big]\Bigg\}.\nonumber
\end{IEEEeqnarray}

The digamma function has an interesting relationship with harmonic numbers. Specifically, $\psi(z+1)-\psi(z)=\frac{1}{z}$ for all positive $z$. As a consequence, 
it can be derived that
\begin{IEEEeqnarray}{rCl}
\frac{\partial R_c(x,\alpha)}{\partial x}&\geq Y_1\\ 
\frac{\partial (GR_u(x,\alpha))}{\partial x}&\leq Y_2, 
\end{IEEEeqnarray}
where
\begin{equation}
    Y_1 = - \frac{KM}{N_c}\frac{\binom{K}{t_c+1}}{\binom{K}{t_c}}\Big( \frac{1}{t_c+1} + \frac{1}{K-t_c} \Big),
\end{equation}
\begin{equation}
Y_2 = \frac{KM}{N_u}\frac{\binom{\frac{K}{G}}{t_u+1}}{\binom{\frac{K}{G}}{t_u}}\Big( \frac{1}{t_u+1} + \frac{1}{
\frac{K}{G}-t_u} \Big).
\end{equation}

It can be seen that $Y_1<0<Y_2$, $\forall{x}\in[0,1]$. $|Y_1|$ is minimal when $x=1$, $|Y_2|$ reaches maximum when $x=1$. Hence, we merely need to prove that $|Y_1|\geq|Y_2|$ when $x=1$, so that $\frac{\partial R_\mathrm{avg}^{(2)}}{\partial x}\leq 0,\forall x\in[0,1]$. The inequality can be written as
\begin{equation}
    M \leq\frac{N_c}{K} \left[ \sqrt{\frac{N_u(K+1)}{N_c(\frac{K}{G}+1)}}-1 \right].
\end{equation}
\end{proof}

\subsection{Proof of Theorem~\ref{pr:propI}}

\begin{proof}
When $M\leq\frac{1}{K}\min(N_c,GN_u)$, we are assured that both $t_c$ and $t_u$ in Eqs.~(\ref{eq:Rc})~and~(\ref{eq:Ru}) will be smaller than 1. As a consequence, the uniform-average rate must be adjusted according to Eq.~(\ref{eq:rate_t<1}) and $R_{\mathrm{avg}}^{(2)}$ becomes a linear function of $x$ (Eq.~(\ref{eq:rate_t<1}) is a linear function of $p$ and $p$ is a linear function of $x$). Since it is only defined over $0\leq x\leq 1$, $R_{\mathrm{avg}}^{(2)}$ must be minimized by either $x=0$ or $x=1$, depending on the sign of its partial derivative respect to $x$.

The law of large numbers tells us that when the number of files $K$ is large, the number of distinct files requested will be very close to its expected value for almost every demand vector $\vec{d}$. If we approximate $N_c(\vec{d})$ and $N_u(\vec{d})$ with their expected values, we have
\begin{align}
    \frac{\partial R_\mathrm{avg}^{(2)}}{\partial x}  \simeq& \frac{\partial R_c}{\partial t_c}\frac{\partial t_c}{\partial x} + G \frac{\partial R_u}{\partial t_u}\frac{\partial t_u}{\partial x}\\
    \simeq& \frac{KM}{N_c}\left\{ -\mathbb{E}[N_c(\vec{d})]+\frac{\binom{K}{2}-\binom{K-\mathbb{E}[N_c(\vec{d})]}{2}}{K}\right\} +\frac{KM}{N_u} \left\{\mathbb{E}[N_u(\vec{d})]-\frac{\binom{\frac{K}{G}}{2}-\binom{\frac{K}{G}-\mathbb{E}[N_u(\vec{d})]}{2}}{\frac{K}{G}}\right\}.
    \label{eq:R_2_t<1}
\end{align}
After expanding Eq.~(\ref{eq:R_2_t<1}) and cancelling out terms we find that $\frac{\partial R_\mathrm{avg}^{(2)}}{\partial x}$ is negative (\ie, $x=1$ minimizes $R_\mathrm{avg}^{(2)}$) when
\begin{equation}
    \frac{N_u}{N_c}> G\cdot\frac{\mathbb{E}^2[N_u(\vec{d})] + \mathbb{E}[N_u(\vec{d})] }{ \mathbb{E}^2[N_c(\vec{d})]+\mathbb{E}[N_c(\vec{d})] },
\end{equation}
and positive otherwise (\ie, $x=0$ minimizes $R_\mathrm{avg}^{(2)}$).

The expected number of distinct common and unique files requested can be computed using the method of indicators:
\begin{align}
\mathbb{E}[N_c(\vec{d})] &= N_c\left[ 1- \left( \frac{N_c+N_u-1}{N_c+N_u} \right)^K \right],\\
\mathbb{E}[N_{u}(\vec{d})] &= N_u\left[ 1- \left( \frac{N_c+N_u-1}{N_c+N_u} \right) ^{K/G} \right].
\end{align}
\end{proof}

\end{document}